\documentclass[11pt]{article}
\usepackage{amsmath}
\usepackage{amsfonts}
\usepackage{latexsym}
\usepackage{amssymb}
\usepackage{graphicx}
\usepackage{amsthm}
\usepackage{chemarrow}
\usepackage{fullpage}
\usepackage{framed}
\usepackage{verbatim}
\usepackage{color}
\usepackage{epsfig}
\usepackage{hyperref}
\usepackage{float}

\title{Catalysis in Reaction Networks}
\author{Manoj Gopalkrishnan\footnote{Part of this work was supported by NSF DMS-0943760}\\
\small School of Technology and Computer Science\\
\small Tata Institute of Fundamental Research\\
\small \texttt{manoj@tcs.tifr.res.in}}
\begin{document}
\maketitle

\theoremstyle{definition}
\newtheorem{theorem}{\bf{Theorem}}[section]
\newtheorem{lemma}[theorem]{\bf{Lemma}}
\newtheorem{corollary}[theorem]{\bf{Corollary}}
\newtheorem{proofl}[theorem]{\bf{Proof}}
\newtheorem{open}{\bf{Open}}
\newtheorem{conjecture}{\bf{Conjecture}}

\theoremstyle{definition}
\newtheorem{definition}{\bf{Definition}}[section]

\theoremstyle{remark}
\newtheorem{example}{\bf{Example}}[section]
\newtheorem{notation}[definition]{\bf{Notation}}
\newtheorem{fact}{\bf{Fact}}
\newtheorem{note}[definition]{\bf{Note}}
\newcommand{\CE}{{\cal E}}
\newcommand{\MC}{\mathbb{C}}
\newcommand{\MR}{\mathbb{R}}
\newcommand{\MZ}{\mathbb{Z}}
\newcommand{\MN}{\mathbb{N}}
\newcommand{\MQ}{\mathbb{Q}}
\newcommand{\iton}{i=1,2,\ldots,n}
\newcommand{\jtom}{j=1,2,\ldots, m}
\newcommand{\MP}{\mathbb{R}_{>0}}
\newcommand{\bs}{\boldsymbol}
\newcommand{\op}{\operatorname}

\begin{abstract}
We define \textit{catalytic networks} as chemical reaction networks with an essentially catalytic reaction pathway: one which is ``on'' in the presence of certain catalysts and ``off'' in their absence.  We show that examples of catalytic networks include synthetic DNA molecular circuits that have been shown to perform signal amplification and molecular logic. Recall that a \textit{critical siphon} is a subset of the species in a chemical reaction network whose absence is forward invariant and stoichiometrically compatible with a positive point. Our main theorem is that all weakly-reversible networks with critical siphons are catalytic. Consequently, we obtain new proofs for the persistence of atomic event-systems of Adleman \textit{et al.}, and normal networks of Gnacadja. We define \textit{autocatalytic networks}, and conjecture that a weakly-reversible reaction network has critical siphons if and only if it is autocatalytic.
\end{abstract}

\section{Introduction}
Biological systems exhibit exquisite structure and behavior. We wish to view such sophistication in biological systems as the result of, as well as intended for the performance of, computation. It is hoped that an attitude to seek for algorithms underlying biological computation will give us insight into this sophistication. One of the languages in which the theory of algorithms can be made precise is the language of circuits. What are the circuits, if any, that make up biological systems? Over the past half century, molecular biologists and biochemists have amassed considerable data about biochemical reaction networks. We will postulate that these are the circuits we seek.

Circuits are constituted from the repeated composition of a small number of distinct parts. This circumstance proves of considerable aid in their design and analysis. For example, one may view circuits as made up solely of switches. Indeed, the circuits in one of the first computers, the Z3, were of this type~\cite{Ceruzzi1981Zuse}. What are the switches of biochemical circuits, if such exist?

To answer this question, let us first consider a catalyzed chemical transformation of species $A$ to species $B$, where species $C$ is the catalyst. In principle, if $C$ is left out, $A$ still gets converted to $B$, but at a lower rate. In practise, the lower rate is typically so much lower that the uncatalyzed reaction is left out of the model. This is so especially for biochemical reaction networks --- for example, one does not expect DNA to self-ligate in the absence of a ligase enzyme within the time scales under consideration, even though such a reaction is possible in principle. We submit that catalysts are the switches of biochemical circuits: their presence turns a certain reaction ``on,'' and, for all practical purposes, their absence turns it ``off.'' Intuitively, a reaction network is \textit{catalytic} iff there exists an essentially catalytic reaction pathway: one which is ``on'' in the presence of certain catalysts and ``off'' in their absence. One of the contributions of this paper is to give a mathematical formulation of this notion in Definition~\ref{def:catalytic}.

What dynamical model should we assume for biochemical circuits? There are several choices: mass action, chemical master equation, reaction-diffusion, empirical dynamics, etc. Since very little is known about the true dynamics in a cell, we should only make the most frugal assumptions.

The results proved in this paper hold at the reaction network level. Fixing the network constrains, but does not completely specify, the dynamical model. Therefore, our results are robust across a wide range of dynamical models, including mass action kinetics. This is in the spirit of the approach advocated by Feinberg~\cite{feinberg79lectures}.

Another contribution of this paper is to relate catalysis to the global attractor conjecture, and the persistence conjecture, both of which are long-standing open problems in the theory of chemical reaction networks. The next few paragraphs will explain this connection.

In 1974, Horn~\cite{horn74dynamics} made the global attractor conjecture for complex balanced mass action systems. Complex balance is a condition that restricts specific rate constants to be special enough to guarantee the existence of a ``free energy function.'' Intuitively, the conjecture asserts that positive steady states act like global attractors. That is, every solution trajectory which originates in the positive orthant must asymptotically reach some positive steady state. This conjecture remains open.

One says that a system of differential equations in Euclidean space is \textit{persistent} if no solution trajectory starting in the positive orthant approaches the boundary. Horn~\cite{horn74dynamics} observed that for complex balanced systems, persistence and the global attractor conjecture are equivalent. Note that there exist mass action systems that violate persistence. For example, the reaction $x\rightarrow y$ leads to exponential decay of the concentration of species $x$ to zero.

In 1987, Feinberg~\cite[Remark~6.1.E]{Feinberg19872229} conjectured that persistence must hold if the underlying reaction network is ``weakly-reversible'': i.e., each component is strongly connected. Since all complex balanced systems are weakly-reversible, this conjecture generalizes Horn's conjecture. Very little is known about Feinberg's conjecture: it remains open even when all reactions are reversible and all reaction rate parameters take the value $1$.

In our hands, the notion of ``siphon'' turns out to be the connecting link between catalysis and persistence. A \textit{siphon} is a subset of the reacting species whose absence is forward invariant. A siphon is \textit{critical} if and only if the absence of the siphon species is stoichiometrically compatible with a positive point. These notions were anticipated by
Feinberg~\cite[Proposition~5.3.1, Remark~6.1.E]{Feinberg19872229} without giving them a name. He proved that the zero coordinates of steady state points are siphons, and that in the absence of critical siphons, solutions from the  positive orthant can not asymptotically approach a boundary steady state. Note that this result does not rule out trajectories approaching the boundary from the positive orthant. For example, a solution trajectory originating in the positive orthant may have an omega-limit on the boundary without violating this result.

The term siphon was known in the literature of Petri net theory, and was introduced to reaction networks by Angeli, De Leenheer and Sontag. They proved that the absence of critical siphons implies persistence~\cite[Theorem~2]{Angeli2007598}. Their result is non-trivial and requires delicate arguments about omega-limits.

Anderson and Shiu~\cite{anderson:1464,anderson:1840} have analyzed some types of critical siphons for which persistence can be established. Shiu and Sturmfels~\cite{Shiu2010Siphons} have shown how to compute critical siphons using computer algebra software. Recently, Gnacadja~\cite{Gnacadja2010Reach} has analyzed a related condition called ``vacuous persistence.''

This paper reinterprets critical siphons in the context of biological systems performing computations. We suggest that to implement exponential amplifiers, single-molecule detectors, and molecular logic gates, one requires networks with critical siphons. Our main theorem (Theorem~\ref{thm:norelsiph}) establishes that all weakly-reversible networks with critical siphons are catalytic. One may say that the obstruction to proving Feinberg's persistence conjecture comes from catalytic species that act like switches.

To prove the theorem, we show that the network structure that encodes catalysis, and the network structure that encodes critical siphons, can both be translated into properties of binomial ideals. The crucial insight is that in weakly-reversible networks, the notion of critical siphon does not depend on whether individual reactions are reversible or irreversible (Lemma~\ref{lem:samesiphons}). This allows us to choose all reactions to be reversible, and set all specific rates to be $1$. Following \cite{adleman-2008}, this detailed balanced mass action system can be represented as a set of binomials. The property we seek to investigate manifests itself as a geometric property of the roots of this set of binomials, and verifying this completes the proof.

The proof employs some elementary algebraic geometric ideas (primary decomposition, saturation). This should not be surprising. Indeed, as has been remarked before~\cite{adleman-2008,Craciun20091551}, the theory of reaction networks is intimately connected with the theory of binomials and binomial ideals. Explicitly acknowledging this can sometimes lead to insights, and proof directions, as in the present case.

We introduce the notion of autocatalytic networks (Definition~\ref{def:autocatalytic}), and conjecture that critical siphons occur precisely in autocatalytic networks (Conjecture~\ref{conj:auto}). We analyze an example of a synthetic DNA molecular circuit, the ``seesaw gate''~\cite{seesawgates}, and observe the coincidence of exponential amplification, autocatalysis, and critical siphons (Example~\ref{Seesaw1}).

\section{Preliminaries}
The formal mathematical study of reaction networks was pioneered in the 1970s by Horn, Jackson and Feinberg~\cite{feinberg72complex,feinberg72chemical,feinberg79lectures,horn72necessary,horn74dynamics,horn72general}. More recent overviews include \cite{adleman-2008,Craciun20091551,feinberg95existence,gunawardena2003chemical,sontag01structure}.

Intuitively, a reaction network is a graph. Each node represents a \textit{complex}, or set of chemical species, with each species accompanied by a non-negative integer encoding its multiplicity (or stoichiometry). It is conventional in chemistry to represent complexes in additive notation: e.g., $x + 2y + 5z$ where $x,y,z$ denote chemical species. Following~\cite{adleman-2008} and \cite[Section~1]{Craciun20091551}, we will depart from this convention and use multiplicative notation for complexes: e.g., the complex $x + 2y +5z$ is now written as $xy^2z^5$. Each edge represents a reaction from the source complex to the sink complex.

The multiplicative notation suggests that reversible reactions can be represented by binomials~\cite{adleman-2008}. This representation is key to the notion of ``associated event-system'' (Definition~\ref{def:asses}).

Succinctly, a reaction network is a graph where the nodes are labeled by monomials in the species. A reaction network is called \textit{weakly-reversible} if, whenever there is a path (sequence of directed edges) from one complex to another, there is a path back. To assign a dynamics to the network, we need to assign a \textit{specific rate constant} to each reaction: this is the rate at which the reaction would take place if all species had unit concentration. With such an assignment, the reaction network is called a \textit{mass action system}. We recall the definitions, following~\cite[Section~1]{Craciun20091551}.

\begin{definition}\
\begin{enumerate}
\item A ``chemical reaction network'' (CRN) consists of the following data:
\begin{enumerate}
\item Positive integers $s$ (for number of species) and $n$ (for number of complexes),
\item A finite directed graph $G$ with vertices $V(G) = \{1,2,\cdots,n\}$, and edges $E(G)\subseteq V\times V$, and
\item An $s$-variable monomial labeling of the vertices \[\psi(x)=\left(\psi_1(x),\psi_2(x),\cdots,\psi_n(x)\right) = \left(\prod_{j=1}^s x_j^{y_{1j}}, \prod_{j=1}^s x_j^{y_{2j}},\cdots,\prod_{j=1}^s x_j^{y_{nj}} \right)\] such that if $\psi_i=\psi_{i'}$ then $i=i'$.
\end{enumerate}
\item A CRN is ``weakly-reversible'' iff each connected component is strongly-connected.
\item A ``mass action system'' (MAS) is a CRN with a weight function $k:E(G)\rightarrow \MR_{>0}$.
\end{enumerate}
\end{definition}

\begin{notation}\
\begin{enumerate}
\item The matrix $A_{k}$ denotes the negative of the Laplacian of $G$. That is, when $i\neq j$, the entry in position $(i,j)$ is $k(i,j)$, and the row sums are zero.
\item The matrix of non-negative integers $(y_{ij})_{n \times s}$ is denoted by $Y = Y_{\psi}$. The $i^{\text{th}}$ row $(y_{i1},y_{i2},\cdots,y_{is})$ is denoted by $y_i$.
\item The tuple $(x_1,x_2,\cdots,x_s)$ is denoted by $x$.
\item For all positive integers $m$, $[m]$ denotes the set $\{1,2,\cdots,m\}$.
\item The set $\mathcal{M}_s = \left\{ \prod_{i=1}^s x_i^{a_i} \mid a\in\MZ_{\geq 0}^s\right\}$ denotes all monic monomials in $s$ variables.
\item The graph $U_G$ is the underlying undirected graph of $G$ with nodes $V(G)$ and edges $\{(i,j)\mid $ either $(i,j)\in E(G)$ or $(j,i)\in E(G)\}$.
\end{enumerate}
\end{notation}

\begin{definition}[Mass Action Kinetics]
The \textit{associated dynamical system} of an MAS $\langle G,\psi,k\rangle$  is given by:
\begin{align*}
\frac{dx(t)}{dt} = \psi(x(t))\cdot A_k \cdot Y.
\end{align*}
\end{definition}

We will sometimes write only the graph for CRN's and MAS's. A reference to $G$ could mean either the graph $G$, or the CRN $\langle G,\psi\rangle$, or the MAS $\langle G,\psi, k\rangle$. The meaning should be clear from context.

\begin{definition}
The \textit{stoichiometric subspace} $S=S_G$ of a CRN $G$ is the linear subspace of $\MR^s$ spanned by $\{ y_{i} - y_{i'} \mid (i,i')\in E(G)\}$, the row-differences of $Y$.
\end{definition}

For all $x$, $\psi(x)\cdot A_k \cdot Y$ lies within $S$. Therefore, all solutions to the associated dynamical system of an MAS lie within affine translates of $S$. Moreover, if the solution originates at a non-negative point (i.e., $x(0)\in\MR^s_{\geq 0}$), then it is known that the solution trajectory is confined to the polyhedron obtained by intersecting the appropriate affine translate of the stoichiometric subspace with the non-negative orthant~\cite[Theorem~4.5]{adleman-2008},~\cite[Corollary~7.3]{sontag01structure}. Following \cite{Craciun20091551}, this polyhedron will be called the ``invariant polyhedron.''

\begin{definition}
Let $G$ be a CRN. For all $x\in\MR^s_{\geq 0}$, the \textit{invariant polyhedron} containing $x$ is $P_x = (x+S)\cap\MR^s_{\geq 0}$.
\end{definition}

Note that invariant polyhedra do not necessarily intersect the positive orthant $\MR^s_{>0}$.

Note that $1$ is a monic monomial. Given two monic monomials $M=\prod_{i=1}^s x_i^{e_i}$ and $N=\prod_{i=1}^s x_i^{f_i}$ from $\mathcal{M}_s$, recall~\cite{adleman-2008} that $M$ {\em precedes} $N$ (and we write $M\prec N$) iff $M \neq N$ and for the least $i$ such that $e_i \not = f_i$, we have $e_i<f_i$.

\begin{definition}[Associated Event-System]\label{def:asses}
Let $G$ be a weakly-reversible CRN. The ``associated event-system'' $\CE_G$ of $G$ is the set of binomials $\{\psi_i - \psi_j \mid \psi_i\prec\psi_j$ and either $(i,j)$ belongs to $E(U_G)$ or $(j,i)$ belongs to $E(U,G)\}$.
\end{definition}

The monomial order $\prec$ chosen in \cite{adleman-2008} was arbitrary; any total order on monomials would do to ensure that the map from a CRN to its associated event-system is well-defined.

Note that $\CE_G$ is a ``natural event-system'' in the language of Adleman \textit{et al.}~\cite[Definition~2.8]{adleman-2008}. It carries the same information as the MAS $\langle U_G, 1\rangle$, but this information is expressed in terms of polynomials. This is because of the one-to-one correspondence between binomials in $\CE_G$ and edges in $U_G$.

Recall that a pure difference binomial is a binomial of the form $M - N$ where $M,N\in \mathcal{M}_s$ are monic monomials. An ideal generated by pure difference binomials will be called a pure difference binomial ideal. Note that when $G$ is a weakly-reversible CRN, $(\CE_G)$ is a pure difference binomial ideal.

We will consider the polynomial ring $\mathbf{k}[x] = \mathbf{k}[x_1,x_2,\cdots,x_s]$ where $\mathbf{k}$ is a field of characteristic zero. $(\CE_G)$ will denote the ideal generated by $\CE_G$. Note that the ideal $\mathfrak{J}_G\subseteq \MQ[x]/(x_1x_2\cdots x_s)$ of Shiu and Sturmfels~\cite{Shiu2010Siphons} is the quotient of $(\CE_G)$ when $\mathbf{k}=\MQ$.

\begin{example}
Consider the MAS $\langle G,\psi,k\rangle$:
\begin{figure}[H]
    \centering
    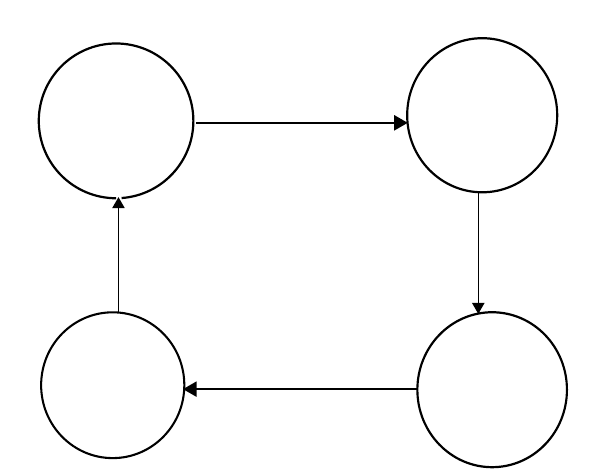
\end{figure}
The number of species is $s=2$. The number of complexes is $n=4$. Each complex is labeled with a monic monomial in $x$ and $y$. The CRN is weakly-reversible, since the only component is strongly-connected. The associated dynamical system is given by:
\begin{align*}
\left(\dot{x}, \dot{y}\right) &= (xy^2, x^4, y^3, 1) \left( \begin{array}{cccc}
-k_{1,2} & k_{1,2} & 0 & 0 \\
0 & -k_{2,3} & k_{2,3} & 0 \\
0 & 0 & -k_{3,4} & k_{3,4} \\
k_{4,1} & 0 & 0 & -k_{4,1}
\end{array} \right)\left( \begin{array}{cccc}
1 & 2\\
4 & 0 \\
0 & 3  \\
0 & 0
\end{array} \right)\
\\&=(3 k_{1,2} xy^2 - 4k_{2,3} x^4 + k_{4,1}, -2 k_{1,2} xy^2 + 3 k_{2,3} x^4 - k_{3,4} y^3 + 2 k_{4,1}).
\end{align*} The stoichiometric subspace $S_G$ is $\MR^2$, and the invariant polyhedron is $\MR^2_{\geq 0}$. The associated event-system $\CE_G$ is $\{xy^2 - x^4, y^3 - x^4, 1-y^3, 1-xy^2\}$.
\end{example}

\section{Catalysis and Catalytic Networks}
The remarkable catalytic behavior of enzymes is one of the most striking features of biochemical networks, and arguably central to life itself. Definition~\ref{def:catalytic} makes precise what it means for a network to behave in a ``catalytic'' manner. The next three examples motivate Definition~\ref{def:catalytic}.

For a single reaction, one says that a catalyst is a species whose availability changes the rate of a reaction but which is left unchanged by the reaction.

\begin{example}\label{Example1}
For the reversible reaction $\{ x + y \autorightleftharpoons{}{} x + z\}$, $x$ catalyzes the conversion of $y$ to $z$.
\end{example}

The above example suggests the following rule for identifying catalysts in a system of reactions: a species $x_i$ is a catalyst for the CRN $G$ if and only if $x_i$ occurs on both sides of some reaction.

\begin{example}\label{Example2}
Consider the reactions
\begin{align*}
    x+y &\autorightleftharpoons{}{} p\
    \\p+q &\autorightleftharpoons{}{} x + z
\end{align*}
There is no species that occurs on both sides of some reaction. Hence, by the proposed rule, one would conclude that there are no catalytic species. However, consider the following sequence of reactions: $x$ combines with $y$ to form $p$, $p$ combines with $q$ to form an $x$ and a $z$. The net result is that $x$, $y$ and $q$ combine to form an $x$ and a $z$. The availability of $x$ changes the rate of this reaction pathway, but $x$ is left unchanged by the pathway. Hence, it appears desirable to call species $x$ a catalyst.
\end{example}

From Example~\ref{Example2}, we see that in the presence of multiple reactions, the concept of catalyst requires a little more care in definition. We could postulate that for a network of reactions, a catalyst is a species whose availability changes the rate of some reaction pathway, but which is left unchanged by that reaction pathway, and attempt to make this precise. Instead, let us turn our attention to a related question concerning the entire network.

It is a common expectation in chemistry that a chemical transformation which takes place in the presence of a catalyst must also take place in its absence, though perhaps at a much slower rate. However, most biochemical networks of interest violate this expectation. For example, consider the ligation of a nicked DNA double strand. In the absence of DNA ligase, this expectation leads us to believe that the ligation still takes place. However, in practise, the rate is so low that this reaction is never explicitly included in the network.

Networks that have such ``essentially catalytic'' pathways we will call ``catalytic'' networks. When framing the formal definition, it is our intention that the networks of Examples~\ref{Example1} and~\ref{Example2} be catalytic because, in both cases, there exist reaction pathways on which species $x$ acts like a switch.

Consider a biochemical network that models the ligation of nicked DNA double strands. Suppose the very slow reactions corresponding to the spontaneous self-ligation of the DNA double strands in the absence of ligase are also included in the network. One expects kinetic simulations to be at least as accurate as when these very slow reactions are left out. On the other hand, steady-state analysis of such an enhanced network may be much less accurate. This may seem surprising at first, for how can a more accurate network model lead to a less accurate prediction? This apparent paradox is really an issue of time scales. If it takes an astronomical amount of time to approach steady state, then the steady state prediction, though perfectly valid at some time scale, is of doubtful utility. Therefore, for the purposes of steady-state analysis, it is something of an art to leave out reactions that are not expected to play a substantial role within time scales of interest. The networks that omit such very slow reactions are likely to be catalytic, especially in the context of biochemistry, because the enzymes that speeds up these very slow reactions acts like switches.

One more nuance needs to factor into our definition of ``catalytic,'' as we now illustrate.

\begin{example}\label{Example3}
Consider the reactions
\begin{align*}
x &\autorightleftharpoons{}{} p\
  \\y &\autorightleftharpoons{}{} q\
  \\x+y+w &\autorightleftharpoons{}{} p+q+w
\end{align*}
We informally identify the species $w$ as a catalyst because its availability changes the rate of the reaction $x+y+w \autorightleftharpoons{}{} p+q+w$, and $w$ is left unchanged by the reaction. There is no reaction of the form $x+y\autorightleftharpoons{}{} p+q$. Therefore, one may be tempted to say that this network is catalytic, and that the species $w$ acts as a switch on the reaction pathway from $x+y$ to $p+q$. However, for $w$ to be essential to this pathway, we intend that the absence of $w$ isolate the two ends of the pathway. In particular, suppose $x$ and $y$ have positive concentrations, and $p$ and $q$ have zero concentration. Then the concentrations of $p$ and $q$ should rise off zero if, and only if, $w$ is present. For this network, the parallel pathways $x \autorightleftharpoons{}{} p$ and $y \autorightleftharpoons{}{} q$ together provide a ``leakage current'' which make $w$ non-essential. We will define ``catalytic'' in such a way as to exclude networks like this one.
\end{example}

Note that in Example~\ref{Example3}, no vertex in the corresponding CRN is labeled with either of the monomials $xy$ or $pq$. Hence, whether a network is catalytic depends on monomials that may not appear as labels of vertices in $G$. We will extend the graph $G$ to a graph $\overline{G}$ that includes every monic monomial as a vertex, and extends the edges of $G$ in the natural manner. We recall the definition of an event-graph from \cite[Definition 2.9]{adleman-2008}. For our present purposes, the following simplified definition suffices.

\begin{definition}[Event-graph]\label{def:evtgraph}
Let $G$ be a weakly-reversible CRN. The \textit{event-graph} $\overline{G}$ of $G$ is a directed graph  with vertices $V(\overline{G})=\mathcal{M}_s$, and edges $(N \psi_i, N \psi_j)$ where $(i,j)\in E(G)$ and $N\in \mathcal{M}_s$ is a monic monomial.
\end{definition}

Note that if $G$ is weakly-reversible then every connected component of $\overline{G}$ is strongly-connected. Further, $G$ can be viewed as a subgraph of $\overline{G}$ in the natural manner.

\begin{example}
Let $G$ be the network in Example~\ref{Example2}. The event-graph $\overline{G}$ has vertices all monomials in the variables $x,y,p,q$, and $z$. The edges are the obvious edges induced by the network. For example, from the monomial $xyq$, by applying the first reaction, there is an edge to the monomial $pq$. From $pq$, by applying the second reaction, there is an edge to the monomial $xz$.
\end{example}

\begin{definition}[Catalytic]\label{def:catalytic}
A weakly-reversible CRN $G$ is \textit{catalytic} iff there exist path-connected $M,N\in V(\overline{G})$ such that $M/\gcd(M,N)$ and $N/\gcd(M,N)$ are not path-connected in $V(\overline{G})$.
\end{definition}

\begin{example}
Let $G$ be the network in Example~\ref{Example2}. The monomials $xyq$ and $xz$ are path-connected in $\overline{G}$ by the path $xyq \rightarrow pq \rightarrow xz$. Dividing by $\gcd(xyq,xz) = x$ yields the monomials $yq$ and $z$ which are not path-connected in $\overline{G}$. This last assertion is true because $z$ is an isolated node of $\overline{G}$, since there is no complex in $G$ that divides $z$. Therefore, $G$ is catalytic.
\end{example}

Note that in this definition, $\gcd(M,N)$ plays the role of a catalyst. On similar lines, it is easy to verify that the CRN in Example~\ref{Example1} is catalytic, and that in Example~\ref{Example3} is not, as intended.

Lemma~\ref{lem:saturated} will provide an equivalent algebraic characterization of ``catalytic'' in terms of saturated ideals. Motivated by this connection, a previous draft of this paper referred to weakly-reversible CRN's that are not catalytic as ``saturated.''

The content of the next lemma is the link between the ideal $(\CE_G)$ generated by the associated event-system in $\mathbf{k}[x]$, and the event-graph $\overline{G}$.

\begin{lemma}\label{lem:evtgraph}
Let $G$ be a weakly-reversible CRN. Let $M$ and $N$ be distinct monic monomials. Then $M-N \in (\CE_G)$ iff $M$ and $N$ are path-connected in $\overline{G}$.
\end{lemma}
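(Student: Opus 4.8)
The plan is to prove both directions by translating between the combinatorial structure of $\overline{G}$ and membership in the pure difference binomial ideal $(\CE_G)$.

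For the easier direction ($\Leftarrow$), I would argue by induction on the length of a path in $\overline{G}$ from $M$ to $N$. A single edge in $\overline{G}$ has the form $(P\psi_i, P\psi_j)$ with $(i,j)\in E(G)$ and $P\in\mathcal{M}_s$. Since $G$ is weakly-reversible, the pair $\{i,j\}$ (in one order or the other) contributes a generator $\psi_i - \psi_j$ to $\CE_G$ up to sign, hence $\psi_i-\psi_j\in(\CE_G)$, and therefore $P\psi_i - P\psi_j = P\cdot(\psi_i-\psi_j)\in(\CE_G)$ because ideals are closed under multiplication by ring elements. For a path $M = M_0 \to M_1 \to \cdots \to M_r = N$, write $M - N = \sum_{t=0}^{r-1}(M_t - M_{t+1})$; each summand lies in $(\CE_G)$ by the one-edge case, so the telescoping sum does too.

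The harder direction ($\Rightarrow$) is where I expect the real work, and the key is to find the right invariant. Define an equivalence relation on $\mathcal{M}_s$ by $P \sim Q$ iff $P$ and $Q$ are path-connected in $\overline{G}$; since $G$ is weakly-reversible, every connected component of $\overline{G}$ is strongly-connected, so path-connectedness really is an equivalence relation on monomials. Now consider the $\mathbf{k}$-vector space $\mathbf{k}[x]$ with its monomial basis, and let $W \subseteq \mathbf{k}[x]$ be the span of all pure differences $P - Q$ with $P \sim Q$; equivalently, $W$ is the set of polynomials whose coefficients sum to zero within each $\sim$-class. I claim $W = (\CE_G)$. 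The inclusion $(\CE_G)\subseteq W$ follows from the $\Leftarrow$ direction applied to generators together with the observation that $W$ is closed under multiplication by monomials (if $P\sim Q$ then $RP \sim RQ$, since multiplying every vertex along a connecting path by $R$ gives a connecting path in $\overline{G}$ by Definition~\ref{def:evtgraph}) and hence under multiplication by arbitrary polynomials, so $W$ is an ideal containing $\CE_G$. For the reverse inclusion $W\subseteq (\CE_G)$, it suffices to show each generator $P - Q$ of $W$ (with $P\sim Q$) lies in $(\CE_G)$, which is exactly the $\Leftarrow$ direction again. Given $W = (\CE_G)$: if $M - N \in (\CE_G) = W$, then reading off the coefficients, the coefficient $+1$ of $M$ and the coefficient $-1$ of $N$ must cancel within a single $\sim$-class (all other monomials have coefficient zero), forcing $M \sim N$, i.e. $M$ and $N$ are path-connected in $\overline{G}$.

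The main obstacle is verifying carefully that $W$ is precisely the ideal generated by $\CE_G$ rather than something larger — in particular checking that $W$ is closed under the ring multiplication and that no ``extra'' pure differences sneak in. The closure argument hinges on the monomial-translation structure built into Definition~\ref{def:evtgraph} (an edge $(\psi_i,\psi_j)$ of $G$ induces the edge $(N\psi_i,N\psi_j)$ for every $N$), which is exactly what makes $\sim$ respect multiplication; I would state this as an explicit sub-claim. One should also take a moment to confirm that the monomial order $\prec$ used to orient the binomials in $\CE_G$ is irrelevant here, since $\pm(\psi_i-\psi_j)$ generate the same ideal, so working with the unoriented set $\{\psi_i-\psi_j\}$ as above loses nothing.
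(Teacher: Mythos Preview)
Your argument is correct and is essentially the paper's proof repackaged: the paper writes $M-N$ explicitly as a $\mathbf{k}[x]$-combination of the generators, restricts to the connected component of $M$ in $\overline{G}$, and evaluates at $(1,\dots,1)$ to derive $1=0$ if $M\not\sim N$; your observation that $(\CE_G)$ equals the subspace $W$ of polynomials with zero coefficient-sum on each $\sim$-class is exactly the same invariant, stated once and for all rather than via contradiction. The only cosmetic difference is that you prove the full identification $W=(\CE_G)$, whereas the paper only uses the inclusion $(\CE_G)\subseteq W$ implicitly.
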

\begin{proof}
($\Leftarrow$) Consider a path $M=M_1,M_2,\cdots,M_l = N$ in $\overline{G}$ from $M$ to $N$. Then for $i=1$ to $l-1$, $M_i - M_{i+1}$ belongs to $(\CE_G)$. Adding, we get $M - N \in (\CE_G)$.\
\\($\Rightarrow$) Suppose $M-N\in (\CE_G)$. There exist $c_{ijk}\in \mathbf{k}$ and monic monomials $M_{ijk}$ and a positive integer $K$ such that:
\begin{align}\label{eqn:sum}
M - N = \sum_{(i,j)\in E(G)}\left(\sum_{k=1}^K c_{ijk} \left(M_{ijk}\cdot\psi_j-M_{ijk}\cdot\psi_i\right)\right).
\end{align}
Suppose for the sake of contradiction that $M$, $N$ are not path-connected in $\overline{G}$. Then $M$ and $N$ are vertices in distinct components of $\overline{G}$, say $K_1$ and $K_2$. Restricting Equation~\ref{eqn:sum} to monomials in $V(K_1)$, and noting that for all $i,j,k$, the monomials $M_{ijk}\cdot\psi_j$ and $M_{ijk}\cdot\psi_i$ are path-connected in $\overline{G}$, we have:
\[M = \sum_{M_{ijk}\cdot\psi_j\in V(K_1)} c_{ijk} \left(M_{ijk}\cdot\psi_j-M_{ijk}\cdot\psi_i\right).\]
Evaluating this expression at the point $\langle 1,1,\cdots,1\rangle$, we get $1=0$, a contradiction.
\end{proof}

The next lemma translates the notion of catalytic networks into algebraic terms. Recall from~\cite[Exercise~4.4.8]{Cox1991Ideals} that the saturation of an ideal $I\subseteq \mathbf{k}[x_1,x_2,\cdots,x_s]$ with respect to a polynomial $f$ is the ideal $I:f^\infty = \{g\in \mathbf{k}[x_1,x_2,\cdots,x_s]\mid$ there exists $k>0$ with $f^kg\in I\}$.

\begin{lemma}\label{lem:saturated}
A weakly-reversible CRN $G$ is catalytic iff $(\CE_G) \subsetneq (\CE_G):(x_1x_2\cdots x_s)^{\infty}$.
\end{lemma}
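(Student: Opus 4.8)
The plan is to translate both sides of the claimed equivalence into statements about the event-graph $\overline{G}$ using Lemma~\ref{lem:evtgraph}, and then observe that they say the same thing. First I would unpack the algebraic side. The containment $(\CE_G)\subseteq (\CE_G):(x_1\cdots x_s)^\infty$ always holds, so the content of $(\CE_G)\subsetneq (\CE_G):(x_1\cdots x_s)^\infty$ is that the saturation is strictly larger, i.e. there exists a polynomial $g$ and an integer $k>0$ with $(x_1\cdots x_s)^k g\in(\CE_G)$ but $g\notin(\CE_G)$.

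The key reduction is that $(\CE_G)$ is a pure-difference binomial ideal, and such ideals behave well under saturation: by results on binomial ideals (Eisenbud–Sturmfels), the saturation of a pure-difference binomial ideal by a monomial is again a pure-difference binomial ideal. So I would argue that $(\CE_G)\subsetneq(\CE_G):(x_1\cdots x_s)^\infty$ if and only if there exist distinct monic monomials $M',N'$ with $M'-N'\in(\CE_G):(x_1\cdots x_s)^\infty$ but $M'-N'\notin(\CE_G)$. Now $M'-N'\in(\CE_G):(x_1\cdots x_s)^\infty$ means $(x_1\cdots x_s)^k(M'-N') = (M'')-(N'')\in(\CE_G)$ for some monomials $M'' = (x_1\cdots x_s)^kM'$, $N'' = (x_1\cdots x_s)^kN'$, where I may also clear the common factor $\gcd(M',N')$ first so that $\gcd(M',N') = 1$. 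By Lemma~\ref{lem:evtgraph}, $M''-N''\in(\CE_G)$ says $M''$ and $N''$ are path-connected in $\overline{G}$, while $M'-N'\notin(\CE_G)$ says $M'$ and $N'$ are not path-connected in $\overline{G}$. Writing $C = (x_1\cdots x_s)^k$, which is a monic monomial, this is exactly the situation $CM'$, $CN'$ path-connected but $M' = CM'/\gcd(CM',CN')$ and $N' = CN'/\gcd(CM',CN')$ not path-connected — so $G$ is catalytic with witnesses $M = CM'$, $N = CN'$.

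Conversely, if $G$ is catalytic, take path-connected $M,N\in V(\overline{G})$ with $M/\gcd(M,N)$ and $N/\gcd(M,N)$ not path-connected. Let $d = \gcd(M,N)$, $M' = M/d$, $N' = N/d$. Since $M-N\in(\CE_G)$ by Lemma~\ref{lem:evtgraph}, and $M-N = d(M'-N')$, I need to exhibit a monomial $C$ (a power of $x_1\cdots x_s$) with $C(M'-N')\in(\CE_G)$; choosing $C$ to be any sufficiently high power of $x_1\cdots x_s$ divisible by $d$ works, since $C(M'-N') = (C/d)\cdot d(M'-N') = (C/d)(M-N)\in(\CE_G)$. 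Then $M'-N'\in(\CE_G):(x_1\cdots x_s)^\infty$, whereas $M'-N'\notin(\CE_G)$ because $M'$ and $N'$ are not path-connected in $\overline{G}$, again by Lemma~\ref{lem:evtgraph}. Hence the saturation strictly contains $(\CE_G)$.

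The main obstacle I expect is the reduction "the saturation strictly enlarges $(\CE_G)$ iff it does so already at the level of pure-difference binomials" — i.e. that $(\CE_G):(x_1\cdots x_s)^\infty$ is generated by pure-difference binomials, so a strict containment must be witnessed by such a binomial. This requires citing or reproving the relevant structure theorem for binomial ideals (that pure-difference binomial ideals are radical up to the right adjustments in characteristic zero, and that lattice ideals / pure-difference binomial ideals are closed under monomial saturation). Everything else — clearing gcd's, matching $\gcd(CM',CN') = C$ when $\gcd(M',N') = 1$, and the bookkeeping with powers of $x_1\cdots x_s$ — is routine once that structural fact is in hand. An alternative route that sidesteps the binomial structure theorem is to work directly with the $\mathbf{k}$-vector space spanned by the pure-difference binomials $M-N$ for $M,N$ in the same component of $\overline{G}$, show this space equals $(\CE_G)$ intersected with the span of all pure differences (via the evaluation-at-$\mathbf{1}$ argument from the proof of Lemma~\ref{lem:evtgraph}), and then check the saturation assertion componentwise on $\overline{G}$; I would pursue whichever is shorter in the writeup.
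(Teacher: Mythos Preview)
Your proposal is correct and follows essentially the same route as the paper: both invoke the Eisenbud--Sturmfels result that saturating a (pure-difference) binomial ideal by a monomial yields a binomial ideal, and then use Lemma~\ref{lem:evtgraph} to translate membership of pure-difference binomials in $(\CE_G)$ and its saturation into path-connectedness statements in $\overline{G}$. Your writeup is more explicit about the bookkeeping (clearing $\gcd$'s, choosing a power of $x_1\cdots x_s$ divisible by the $\gcd$), and you rightly flag as the one nontrivial input the fact that the saturation is again generated by pure-difference binomials---a point the paper asserts rather tersely via the claim that the saturation is generated by $\{(N-M)/\gcd(M,N):(M,N)\in E(\overline{G})\}$.
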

\begin{proof}
Note that the saturation $(\CE_G):(x_1x_2\cdots x_s)^{\infty}$ of the binomial ideal $(\CE)$ is itself a binomial ideal by \cite[Corollary~1.7.(a)]{eisenbud1994binomial}. Therefore $(\CE_G):(x_1x_2\cdots x_s)^{\infty}$ is generated by the saturates $\{\frac{N-M}{\gcd{(M,N)}}\mid (M,N) \in E(\overline{G})\}$ of binomials.\
\\$(\Leftarrow)$ If $G$ is not catalytic then by Lemma~\ref{lem:evtgraph} all binomials $\frac{N-M}{\gcd(M,N)}$ are in $(\CE_G)$.\
\\$(\Rightarrow)$ Suppose $G$ is catalytic. Then there exist distinct, path-connected $M_1,M_2\in V(\overline{G})$ such that $M_2/\gcd(M_1,M_2)$ to $M_1/\gcd(M_1,M_2)$ are not path-connected in $V(\overline{G}$. By Lemma~\ref{lem:evtgraph}, $M_2/\gcd(M_1,M_2) - M_1/\gcd(M_1,M_2) \in (\CE_G):(x_1x_2\cdots x_s)^{\infty}\setminus(\CE_G)$.
\end{proof}

\section{Networks with Critical Siphons are Catalytic}
Recall from \cite{Angeli2007598} the concept of ``persistence'' of positive dynamical systems, which models the idea of ``species non-extinction.'' This notion can be extended to chemical reaction networks in a natural manner: a CRN is persistent if, for every choice of specific rate constants, the corresponding MAS is persistent. Recall that for $x:\MR_{\geq 0}\rightarrow\MR^s$, the omega-limit set $\omega(x)$ of $x$ is the set $\{y\in\MR^s\mid$ there exists an increasing sequence $t_1,t_2,\dots$ tending to infinity such that the sequence $x(t_1), x(t_2),\dots$ is Cauchy with limit $y\}$. In words, the omega-limit set of a trajectory consists of those points that are approached arbitrarily close, infinitely often, along some unbounded increasing sequence of times.


\begin{definition}[Persistence]
A CRN $\langle G,\psi\rangle$ is \textit{persistent} iff for every MAS $\langle G,\psi,k\rangle$, for every solution $x:\MR_{\geq 0}\rightarrow\MR^s_{>0}$ to the associated dynamical system, the omega-limit set of $x$ does not meet the boundary of the positive orthant. That is, $\omega(x)\cap \partial\MR^s_{\geq 0} = \emptyset$.
\end{definition}

Note that by our definition a persistent MAS is allowed solutions that escape to infinity, as well as ``omega-limit points at infinity'' where some coordinates are infinite and others may be zero, since such points at infinity do not belong to $\MR^s_{\geq 0}$, but to a compactification.

\begin{open}[Feinberg's Persistence Conjecture, 1987]~\cite[Remark~6.1.E]{Feinberg19872229}\label{open:feinberg}
All weakly-reversible CRNs are persistent.
\end{open}

One of the consequences of Feinberg's conjecture is Horn's global attractor conjecture which has remained open since 1974. Recall that an MAS $\langle G, \psi, k\rangle$ is \textit{complex balanced} iff there exists a positive point $x_0\in\MR^s_{>0}$ such that $\psi(x_0)\cdot A_k = 0$.

\begin{open}[Horn's Global Attractor Conjecture, 1974]\cite{horn74dynamics}
Let $G$ be a complex balanced MAS. Then every invariant polyhedron $P$ contains a positive point $x^*$ such that for all solutions $x(t)$ to the associated dynamical system with $x(0)\in P\cap\MR^s_{>0}$, the limit $\displaystyle\lim_{t\rightarrow \infty} x(t)$ exists, and equals $x^*$.
\end{open}

Recall that an MAS is \textit{detailed balanced} iff all reactions are reversible, and the specific rate constants are such that the reactions admit a positive point of simultaneous balance of all the reactions. Every detailed balanced system is also complex balanced. Even for detailed balanced systems with every specific rate constant equal to $1$, the persistence and global attractor conjectures are open problems.

Recall the notions of siphon and critical siphon from~\cite{Angeli2007598}. The intuitive idea is that a siphon is a set of species whose absence is forward invariant. A siphon is critical if the absence of the siphon species is stoichiometrically compatible with a positive point. The important point to note is that this definition can be made at the level of the network, without making any reference to the dynamics.

\begin{definition}[Siphon, critical siphon]\label{def:siphon}
Let $G$ be a CRN. A nonempty set $Z\subseteq [s]$ is a \textit{siphon} of $G$ iff for all $(i,j)\in E(G)$, if there exists $k\in Z$ such that $x_k \mid \psi_j$ then there exists $l\in Z$ such that $x_l \mid \psi_i$. A siphon $Z$ is \textit{critical} iff there exists a point $z\in\MR^s_{\geq 0}$ such that $Z=\{i\mid z_i=0\}$ and the invariant polyhedron containing $z$ intersects $\MR^s_{>0}$.
\end{definition}

\begin{example}
For the network of Example~\ref{Example1}, every siphon containing (the index of) $z$ must also contain either $x$ or $y$. However, the absence of $x$ is forward-invariant, so the set $\{x\}$ is itself a siphon.
\end{example}

We will now reinterpret critical siphons in the context of biological systems performing computations. Many circumstances in biology require biochemical circuits to detect extremely small concentrations of some species, or the simultaneous presence of several species. This is a signal amplification task, and at the same time a logical AND operation. Let us call an operation that performs both these tasks simultaneously an ``amplifying AND'' operation. For example, exponential amplifiers, single-molecule detectors which must report on the presence or absence of small concentrations of some species, and molecular logic gates, all require ``amplifying AND'' operations.

For these circuits, the presence of input should lead to amplification of one or more ``reporter species,'' preferably exponentially fast so as to exceed a critical threshold of detection within a reasonable amount of time. We call this the \textit{sensitivity} requirement for the amplifying AND operation. This sensitivity requirement appears to be important for the error-free operation of large circuits.

In order to ensure that there are no false positives, under appropriate initial conditions, absence of input and reporter species must be forward invariant in time. This is a \textit{robustness} requirement. To ensure robustness of the amplifying AND gate, some superset of the reporter species together with the inputs (where the extra species act as control) must correspond to a siphon.

Exponential amplification requires that small and large concentrations of the reporter species be stoichiometrically compatible. We suggest that this must lead to a stoichiometrically compatible direction transverse to the siphon, and that critical siphons are related to exponential amplification. The example below provides evidence in favor of this suggestion.

\begin{example}\label{Seesaw1}
The ``seesaw gate'' of Qian and Winfree~\cite{seesawgates} is a synthetic DNA network primitive analogous to a transistor. Like transistors, seesaw gates are intended to be composed to form large circuits that can perform a variety of functions like amplification, digital logic, etc. Zhang \textit{et al.}~\cite{Zhang16112007} have demonstrated exponential amplification with a related motif.

Here we consider the CRN of a one input, one output seesaw gate, modeled by the two reversible reactions below, involving the five chemical species $gate:output$, $input$, $gate:input$, $output$, and $gate:fuel$.
\begin{align*}
gate:output + input &\autorightleftharpoons{}{} gate:input + output\
\\gate:input + fuel &\autorightleftharpoons{}{} gate:fuel + input
\end{align*}

It is clear that $\{input, gate:input\}$ is a siphon because the absence of $input$ and $gate:input$ is forward invariant. This confers some robustness to a seesaw gate. One wishes to have even more robustness against small amounts of noise, since empirical observation suggests that small leakages are inevitable when composing larger circuits. Ideally, the output should be released only when the input exceeds some threshold. To accomplish this, a small concentration of an additional ``threshold'' species that absorbs ``input'' species is introduced. This new reaction \[input + threshold\text{ } \autorightarrow{}{} \text{ }waste\] provides negative feedback to small leakages until the threshold species is exhausted. One may say that this negative feedback ``thickens the siphon.''

Note that the CRN is catalytic. There are edges in the event-graph from the complex $fuel + gate:output + input$ to the complex $fuel + gate:input + output$, to $output + gate:fuel + input$, but there is no path from $fuel + gate:output$ to $output + gate:fuel$. Thus, $input$ acts as a catalyst.

The siphon $\{input, gate:input\}$ is not critical because the sum of concentrations of $input$ and $gate:input$ is a dynamical invariant, \textit{i.e.}, its value remains unchanged along solution trajectories. Correspondingly, one may verify that the seesaw gate is not an exponential amplifier.

If we set output and input to be the same species, then the net reaction becomes \[fuel + gate:output + input \autorightleftharpoons{}{} gate:fuel + 2\text{ } input.\] This leads to exponential amplification, as verified by simulation by Qian and Winfree~\cite{seesawgates}. Correspondingly, $\{input, gate:input\}$ is now a critical siphon.

For this last reaction network, note that the catalyst $input$ is able to catalyze its own production. We will see later (Definition~\ref{def:autocatalytic}) that this is an example of an ``autocatalytic network.''
\end{example}

We have suggested that critical siphons may be important for achieving interesting behavior in biochemical circuits. Our main theorem says that:

\begin{theorem}\label{thm:norelsiph}
Weakly-reversible networks with critical siphons are catalytic.
\end{theorem}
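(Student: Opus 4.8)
The plan is to prove the statement directly from Definition~\ref{def:catalytic}: given a weakly-reversible CRN $G$ carrying a critical siphon $Z$, I will produce path-connected monic monomials $M,N\in V(\overline G)$ whose reduced forms $M/\gcd(M,N)$ and $N/\gcd(M,N)$ are \emph{not} path-connected in $\overline G$. By Lemmas~\ref{lem:evtgraph} and~\ref{lem:saturated} this is the same as exhibiting a binomial lying in $(\CE_G):(x_1x_2\cdots x_s)^\infty$ but not in $(\CE_G)$, so one could phrase everything ideal-theoretically instead; I will argue combinatorially. (One could also first pass to the fully reversible network $U_G$ using Lemma~\ref{lem:samesiphons}, but this is not essential below.)

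First I would convert criticality into an integral stoichiometric vector that is strictly positive on the siphon. By Definition~\ref{def:siphon} there is $z\in\MR^s_{\geq 0}$ with zero-set exactly $Z$ and a point $p\in(z+S)\cap\MR^s_{>0}$; then $v':=p-z\in S$ satisfies $v'_i=p_i>0$ for every $i\in Z$. Since $S$ is spanned by the integer row-differences $y_i-y_{i'}$, its rational points are dense in $S$, so the open condition ``positive on $Z$'' is already met by some rational point of $S$, hence after clearing denominators by some integer combination $v=\sum_{(i,i')\in E(G)}c_{ii'}\,(y_{i'}-y_i)$ with $v_i>0$ for all $i\in Z$. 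Writing $v=v^+-v^-$ for its (disjointly supported, nonnegative) positive and negative parts, we get $Z\subseteq\operatorname{supp}(v^+)$ and $\operatorname{supp}(v^-)\cap Z=\emptyset$. Since siphons are nonempty, $x^{v^+}$ is divisible by some species of $Z$, while $x^{v^-}$ is divisible by none.

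Next I would join $x^{v^+}$ and $x^{v^-}$ in $\overline G$ after multiplying by a large ``catalyst'' monomial $H=(x_1x_2\cdots x_s)^K$; set $M=x^{v^-}H$ and $N=x^{v^+}H$. Starting from $M$, apply for each edge with $c_{ii'}>0$ the reaction $(i,i')$ exactly $c_{ii'}$ times, and for each edge with $c_{ii'}<0$ a directed path in $G$ from $i'$ back to $i$ (which exists by weak-reversibility and realizes the displacement $y_i-y_{i'}$) repeated $|c_{ii'}|$ times; the net change of the exponent vector is exactly $v$, so this process ends at $N$. Choosing $K$ larger than the total number of reaction applications times $\max_{i,j}y_{ij}$ keeps every intermediate exponent vector nonnegative and divisible by the required source monomial, so this is a legitimate walk in $\overline G$; hence $M$ and $N$ are path-connected. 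Because $\gcd(x^{v^+},x^{v^-})=1$ we have $\gcd(M,N)=H$, so $M/\gcd(M,N)=x^{v^-}$ and $N/\gcd(M,N)=x^{v^+}$, and it remains only to show that these two are \emph{not} path-connected in $\overline G$.

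Here the siphon hypothesis enters. Because $Z$ is a siphon, ``being divisible by a species of $Z$'' propagates backward along edges of $G$, and strong connectivity of each component then forces every component of $G$ to be either $Z$-full (all of its complexes divisible by a $Z$-species) or $Z$-free (none). Consequently, if a monic monomial is divisible by no species of $Z$, so is each of its neighbours in $\overline G$: an edge coming from a $Z$-full component would make the shared monomial divisible by a $Z$-species, a contradiction, while an edge from a $Z$-free component only multiplies or divides by monomials in non-$Z$ species. Hence the whole connected component of $x^{v^-}$ in $\overline G$ avoids every species of $Z$, whereas $x^{v^+}$ does not, so $x^{v^-}$ and $x^{v^+}$ lie in different components; thus $M$ and $N$ are path-connected while their reduced forms are not, and $G$ is catalytic. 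I expect the main obstacle to be the bookkeeping in the ``large catalyst'' step — verifying that the walk realizing the lattice vector $v$ runs without any exponent going negative — together with the passage from the real vector $v'\in S$ to an honest integer combination of row-differences; once those are in place, the conceptual core (that a critical siphon makes $x^{v^+}$ and $x^{v^-}$ behave like the two ends of a switch controlled by the catalyst $H$) is immediate from the $Z$-full/$Z$-free dichotomy.
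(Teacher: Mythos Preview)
Your proof is correct and takes a genuinely different route from the paper's. The paper argues the contrapositive: assuming $G$ is non-catalytic, it uses Lemma~\ref{lem:saturated} to get $(\CE_G)=(\CE_G):(x_1\cdots x_s)^\infty$, then Lemma~\ref{lem:dense} (primary decomposition plus a density result of Mumford over $\MC$) and the absolute-value map of Lemma~\ref{lem:alg} to conclude that $V_\MR(\CE_G)\cap\MR^s_{>0}$ is strongly dense in $V_\MR(\CE_G)\cap\MR^s_{\geq 0}$; finally it invokes Birch's theorem for the detailed balanced system $\langle U_G,1\rangle$ together with a compactness argument to show that no boundary point can be a critical siphon-point. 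Your argument is instead entirely combinatorial: you extract from criticality an integer stoichiometric vector $v$ positive on $Z$, build an explicit walk in $\overline G$ between $x^{v^-}H$ and $x^{v^+}H$ using weak reversibility and a large enough buffer $H$, and then separate $x^{v^-}$ from $x^{v^+}$ via the $Z$-full/$Z$-free dichotomy of linkage classes (which is essentially the content of Lemma~\ref{lem:samesiphons}). Your route is more elementary and constructive---it actually exhibits the catalytic pair and avoids both the algebraic-geometric machinery and Birch's theorem; moreover, since $x^{kv^-}$ still avoids all $Z$-species while $x^{kv^+}$ does not, your argument immediately yields the ``strictly catalytic'' strengthening as well. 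The paper's route, on the other hand, produces the independent density statement of Lemma~\ref{lem:alg}, which is what the paper later reuses when extending the result to strictly catalytic networks.
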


The proof of Theorem~\ref{thm:norelsiph} appears at the end of this section. There are many examples of CRNs that have siphons but are non-catalytic. For example, any non-catalytic network that is conservative (conserves mass) has a siphon consisting of all the species. Therefore, the adjective ``critical'' is required in Theorem~\ref{thm:norelsiph}.

Theorem~\ref{thm:norelsiph} can be used to prove the persistence of non-catalytic, weakly-reversible networks. This follows from a non-trivial result of Angeli, De Leenheer, and Sontag, who show that reaction networks without critical siphons are persistent~\cite[Theorem~2]{Angeli2007598}. We sketch their proof idea. They prove the contrapositive by considering a solution trajectory $x(t)$ that originates in the positive orthant ($x(0)\in\MR^s_{>0}$) and whose omega-limit set $\omega$ intersects the boundary, and produce from $\omega$ a critical siphon. They consider the set $S$ which is the intersection of this omega-limit set $\omega$ with the boundary $\partial \MR^s_{\geq 0}$, and claim it is an invariant set. If not forward invariant, then there is a trajectory starting on $S$ that enters the positive orthant. Following this trajectory backwards in time, one observes that it leaves the non-negative orthant. This is a contradiction, since omega-limit sets are invariant. Similarly, we obtain backward invariance. It remains to observe that from this invariant set $S$ one can derive a critical siphon, which essentially follows from definitions.

Though they state their theorem only for conservative networks (where trajectories are bounded), their proof nowhere makes use of this assumption. Their result requires neither weak-reversibility nor complex balancing.

Therefore, as a corollary of Theorem~\ref{thm:norelsiph}, we have:

\begin{corollary}\label{cor:persistent}
Every weakly-reversible, non-catalytic CRN is persistent.
\end{corollary}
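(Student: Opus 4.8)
The plan is to obtain Corollary~\ref{cor:persistent} as an immediate consequence of Theorem~\ref{thm:norelsiph} together with the persistence theorem of Angeli, De Leenheer, and Sontag~\cite[Theorem~2]{Angeli2007598}, arguing by contraposition. First I would read Theorem~\ref{thm:norelsiph} in its contrapositive form: since ``weakly-reversible and possessing a critical siphon implies catalytic,'' the logical negation gives that a weakly-reversible CRN which is \emph{not} catalytic cannot possess a critical siphon. No work is needed here beyond invoking the theorem.

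Next I would invoke \cite[Theorem~2]{Angeli2007598}, which asserts that any CRN with no critical siphon is persistent. Chaining the two implications, a weakly-reversible, non-catalytic CRN $G$ has no critical siphon by the first step, and is therefore persistent by the second; this is exactly the statement of the corollary. So the body of the proof is a two-line deduction.

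The only subtlety --- and the closest thing to an obstacle --- is a matter of hypotheses rather than mathematics. Angeli, De Leenheer, and Sontag state their theorem for conservative networks, so one should record explicitly (as the discussion preceding the corollary already does) that their proof uses only the invariance of omega-limit sets and the forward/backward invariance argument that extracts a critical siphon from any boundary point of $\omega(x)$; conservativity, weak-reversibility, and complex balancing are nowhere used. I would also check that the notion of persistence used in \cite{Angeli2007598} matches the one adopted here, namely that $\omega(x)\cap\partial\MR^s_{\geq 0}=\emptyset$ for every positive solution $x$ of every associated mass-action system --- which it is, so the two results compose without friction. With these bookkeeping remarks in place, the proof is complete.
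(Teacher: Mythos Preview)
Your proposal is correct and matches the paper's own argument essentially verbatim: take the contrapositive of Theorem~\ref{thm:norelsiph} to conclude that a weakly-reversible, non-catalytic CRN has no critical siphon, then apply \cite[Theorem~2]{Angeli2007598} to obtain persistence. The paper also flags the same bookkeeping point you raise, namely that Angeli--De Leenheer--Sontag's proof does not actually use their stated conservativity hypothesis.
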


In fact, Corollary~\ref{cor:persistent} holds for a larger class of dynamical models than mass action kinetics. See \cite{Angeli2007598} for details. By Corollary~\ref{cor:persistent}, weakly-reversible CRNs whose persistence remains unknown must be catalytic. Therefore, one may say that the obstruction to proving Feinberg's persistence conjecture comes from catalysis.

Complex balanced systems are weakly-reversible. Therefore, by Corollary~\ref{cor:persistent}, non-catalytic complex balanced systems are persistent. Since the global attractor conjecture is true for persistent, complex balanced systems~\cite{horn74dynamics}, we have:

\begin{corollary}\label{cor:gas}
The global attractor conjecture is true for non-catalytic, complex balanced MASs.
\end{corollary}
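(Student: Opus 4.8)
The plan is to assemble three ingredients. First, I would observe that the hypotheses are consistent and that the term ``catalytic'' even applies: every complex balanced MAS is weakly-reversible, since if some connected component of $G$ were not strongly connected, the kernel of the (transposed) Laplacian $A_k$ would be spanned by vectors supported on the terminal strongly connected components, and then $\psi(x_0)\cdot A_k=0$ would force $\psi_i(x_0)=0$ for some complex $i$, impossible for $x_0\in\MR^s_{>0}$. So a non-catalytic complex balanced MAS is a non-catalytic weakly-reversible CRN, and Corollary~\ref{cor:persistent} applies: for every choice of rate constants and every solution $x(t)$ with $x(0)\in\MR^s_{>0}$, the omega-limit set $\omega(x)$ misses $\partial\MR^s_{\geq 0}$.

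Second, I would close the gap between persistence (which, in the sense used here, still permits trajectories that escape to infinity) and the genuine convergence asserted by the global attractor conjecture, using classical Horn--Jackson theory. A complex balanced MAS $\langle G,\psi,k\rangle$ admits a strictly decreasing Lyapunov function --- the pseudo-Helmholtz free energy $\sum_i x_i(\ln x_i - 1)$ recentered at the positive equilibrium --- whose restriction to any invariant polyhedron $P_x$ has compact sublevel sets. Hence every forward trajectory starting in $P_x\cap\MR^s_{>0}$ is bounded, so $\omega(x)$ is nonempty and compact; persistence places $\omega(x)$ in the interior $\MR^s_{>0}$; and LaSalle's invariance principle together with the strict Lyapunov function forces $\omega(x)$ to consist of positive equilibria lying in $P_x$. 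Since Horn and Jackson showed that each invariant polyhedron of a complex balanced system contains exactly one positive equilibrium $x^*$, we conclude $\omega(x)=\{x^*\}$, i.e.\ $\lim_{t\to\infty}x(t)=x^*$, which is precisely the conclusion of the global attractor conjecture. (Alternatively, one may simply cite Horn's observation~\cite{horn74dynamics} that persistence implies the global attractor conjecture for complex balanced systems and skip this paragraph.)

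I do not expect a real obstacle here: the argument is a routine concatenation of Corollary~\ref{cor:persistent} with standard complex-balanced theory. The only points requiring vigilance are the quantifier structure --- persistence of the CRN means persistence of the MAS for \emph{every} rate constant assignment $k$, which is exactly what Corollary~\ref{cor:persistent} delivers --- and the fact that the Lyapunov-function and uniqueness statements invoked in the last step are properties of each fixed complex balanced $\langle G,\psi,k\rangle$, so they combine cleanly with the per-$k$ persistence conclusion.
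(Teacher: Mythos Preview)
Your proposal is correct and follows essentially the same route as the paper: observe that complex balance implies weak reversibility, invoke Corollary~\ref{cor:persistent} to obtain persistence, and then appeal to Horn's result~\cite{horn74dynamics} that persistence plus complex balance yields the global attractor conjecture. You have simply unpacked the two cited facts (the Laplacian/kernel argument for weak reversibility and the Lyapunov/LaSalle argument behind Horn's reduction) in more detail than the paper's two-sentence justification, but the logical skeleton is identical.
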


We now establish a number of lemmas that will be required for the proof of Theorem~\ref{thm:norelsiph}. The first task is to translate the notion of critical siphon to algebra. We first translate from the CRN $G$ to the underlying undirected CRN $U_G$ which corresponds to setting every reaction to be reversible, and from there to the associated event-system $\CE_{G}$, which corresponds to setting all reaction rates to have the value $1$.

\begin{lemma}\label{lem:samesiphons}
Let $G$ be a weakly-reversible CRN. The CRNs $U_G$ and $G$ have the same siphons, and critical siphons.
\end{lemma}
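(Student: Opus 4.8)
The plan is to split the statement into two independent observations — that $G$ and $U_G$ have the same stoichiometric subspace (hence the same invariant polyhedra), and that they have the same siphons — and then combine them, since by Definition~\ref{def:siphon} the criticality of a siphon $Z$ refers only to $Z$ itself together with the invariant polyhedron of a point whose zero-set is exactly $Z$.

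For the stoichiometric subspaces, observe that $E(U_G)$ contains, for each $(i,i')\in E(G)$, both $(i,i')$ and $(i',i)$, and that $y_i-y_{i'}=-(y_{i'}-y_i)$. Thus the spanning set $\{y_i-y_{i'}\mid (i,i')\in E(U_G)\}$ of $S_{U_G}$ is simply the spanning set $\{y_i-y_{i'}\mid (i,i')\in E(G)\}$ of $S_G$ together with its negatives, so $S_{U_G}=S_G$, and therefore $P^{U_G}_x=P^G_x$ for every $x\in\MR^s_{\geq 0}$. This part uses nothing about weak-reversibility.

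For siphons, one inclusion is immediate: since $E(G)\subseteq E(U_G)$, any set satisfying the siphon condition for all edges of $U_G$ satisfies it for all edges of $G$, so every siphon of $U_G$ is a siphon of $G$. The content is the converse, and here is where weak-reversibility enters. I would first record the elementary \emph{propagation fact}: if $Z$ is a siphon of a CRN and $i_0\rightarrow i_1\rightarrow\cdots\rightarrow i_r$ is a directed path with $x_k\mid\psi_{i_r}$ for some $k\in Z$, then $x_l\mid\psi_{i_0}$ for some $l\in Z$; this follows by induction on $r$, applying the definition of siphon to the last edge $(i_{r-1},i_r)$ and then to the shorter path. Now let $Z$ be a siphon of $G$ and take an arbitrary edge of $U_G$. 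It is either an edge $(i,j)$ of $G$, for which the siphon condition is inherited directly, or the reverse $(j,i)$ of such an edge. In the latter case, weak-reversibility provides a directed path in $G$ from $j$ back to $i$; if $x_k\mid\psi_i$ for some $k\in Z$, the propagation fact applied to that path yields $x_l\mid\psi_j$ for some $l\in Z$, which is exactly the siphon condition for the edge $(j,i)$. Hence $Z$ is a siphon of $U_G$.

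The step I expect to be the crux is this last one — producing the back-path from $j$ to $i$ from weak-reversibility and propagating the siphon condition along it; everything else is bookkeeping. Finally, combining the two observations: $G$ and $U_G$ have exactly the same siphons, and for any such common siphon $Z$ the criticality requirement (existence of $z\in\MR^s_{\geq 0}$ with $\{i\mid z_i=0\}=Z$ whose invariant polyhedron meets $\MR^s_{>0}$) is literally the same statement for both networks, since their invariant polyhedra coincide. Therefore $G$ and $U_G$ have the same critical siphons.
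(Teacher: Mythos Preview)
Your proposal is correct and follows essentially the same approach as the paper: the invariant polyhedra coincide because $S_G=S_{U_G}$, and the nontrivial direction (that a $G$-siphon is a $U_G$-siphon) is obtained by using weak-reversibility to find a directed $G$-path from $j$ back to $i$ and propagating the siphon condition along it. Your write-up is slightly more explicit (you state the propagation fact and the equality $S_G=S_{U_G}$ separately), but the argument is the same.
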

\begin{proof}
Let $Z\subseteq [s]$. If $Z$ is a siphon of $U_G$ then it is immediate that $Z$ is a siphon of $G$.

Suppose $Z$ is a siphon of $G$. We claim that $Z$ is a siphon of $U_G$. Consider $(i,j)\in E(G)$. We need to prove that if there exists $k\in Z$ such that $x_k \mid \psi_i$ then there exists $l\in Z$ such that $x_l \mid \psi_j$. Suppose $x_k \mid \psi_i$. Since $E(G)$ is weakly-reversible, there is a path $\langle j = i_0, i_1,\cdots,i_p = i\rangle$ from $j$ to $i$ in $E(G)$. Since $Z$ is a $G$-siphon, there exists $l_{p-1}\in Z$ such that $x_{l_{p-1}} \mid \psi_{i_{p-1}}$. By repeating this argument backwards along the path till we reach the node $j$, the claim follows.

Since $G$ and $U_G$ have the same invariant polyhedra, it is immediate that a siphon $Z$ is critical for $G$ iff it is critical for $U_G$.
\end{proof}

Note that, by~\cite[Theorem~3.1]{Shiu2010Siphons}, inclusion-minimal siphons of $G$ could have been defined directly in terms of the ideal $(\CE_G)$. This would give an alternate proof of Lemma~\ref{lem:samesiphons}.

The critical siphons of $U_G$ can be studied by studying the zeros of $\CE_G$, as we now show. First we need a lemma from linear algebra.

\begin{lemma}\label{lem:linal}
Let $s$ be a positive integer, $S$ be a vector subspace of $\MR^s$, $b\in\MR^s_{>0}$ be a positive point and $z \in S+b$ be a non-negative point. Consider $\alpha\in \{0,1\}^s$ with coordinates $\alpha_i = 1$ iff $z_i>0$, else $\alpha_i = 0$. Then the invariant polyhedron containing $\alpha$ intersects $\MR^s_{>0}$.
\end{lemma}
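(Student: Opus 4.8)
The plan is to produce an explicit strictly positive point lying in $\alpha+S$; since the invariant polyhedron containing $\alpha$ is $P_\alpha=(\alpha+S)\cap\MR^s_{\geq 0}$, any such point witnesses $P_\alpha\cap\MR^s_{>0}\neq\emptyset$. The only hypothesis to exploit is $z\in S+b$, which, because $S$ is a linear subspace, is equivalent to $b-z\in S$; hence $\alpha+\lambda(b-z)\in\alpha+S$ for every scalar $\lambda$, and it suffices to choose $\lambda>0$ making every coordinate of $\alpha+\lambda(b-z)$ positive.

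First I would look at the coordinates $i$ with $z_i=0$ (equivalently $\alpha_i=0$): there the $i$-th coordinate of $\alpha+\lambda(b-z)$ equals $\lambda b_i$, which is positive for every $\lambda>0$ since $b\in\MR^s_{>0}$. Then I would look at the coordinates $i$ with $z_i>0$ (equivalently $\alpha_i=1$): there the $i$-th coordinate equals $1+\lambda(b_i-z_i)$, which is positive as soon as $\lambda(z_i-b_i)<1$. Since there are only finitely many such coordinates, all of these inequalities hold simultaneously for all sufficiently small $\lambda>0$. Fixing one such $\lambda$ gives $\alpha+\lambda(b-z)\in(\alpha+S)\cap\MR^s_{>0}$, which completes the argument.

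There is no serious obstacle here; the two points that need a little care are choosing the translation direction to be $b-z$ (rather than $z-b$), so that it points strictly into the positive orthant precisely on the zero set of $z$, and taking $\lambda$ small enough to control the finitely many coordinates where $z_i>b_i$. Geometrically, the statement simply says that moving from the $0/1$ point $\alpha$ an arbitrarily small amount in the stoichiometric direction $b-z$ immediately enters the open positive orthant.
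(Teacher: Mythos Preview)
Your argument is correct and is essentially identical to the paper's: both construct the point $d=\alpha+\lambda(b-z)\in\alpha+S$ for sufficiently small $\lambda>0$ and check coordinate-wise that $d\in\MR^s_{>0}$ via the same two-case split on whether $z_i=0$ or $z_i>0$.
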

\begin{proof}
Let $\lambda\in\MR_{>0}$ be such that for $\iton$, $|\lambda(b_i-z_1)|<1$. Consider $d=\alpha + \lambda (b-z)$. Since $\lambda(b-z)\in S$, we have $\alpha\in S+d$. It is enough to show that $d\in\MR^s_{>0}$. For all $\iton$, either\
\\Case 1: $\alpha_i=0$, in which case $z_i=0$, and so $d_i = \lambda(b_i -  z_i) = \lambda b_i>0$, or\
\\Case 2: $\alpha_i=1$, in which case $|\lambda(b_i-z_i)|<1$ implies $d_i>0$.
\end{proof}

When $\mathbf{k}=\MR$ or $\mathbf{k}=\MC$, and $\CE\subseteq\mathbf{k}[x]$, the variety $V_\mathbf{k}(\CE)$ will denote the zeros of $\CE$ over $\mathbf{k}$ with the analytic topology. So far, we have been talking about siphons as subsets of $[s]$. We will find it useful to talk about points whose zero sets are siphons. We call such points siphon-points.

\begin{definition}[Siphon-point, critical siphon-point]
Let $G$ be a weakly-reversible CRN, and $Z\subseteq [s]$ be non-empty. A point $\alpha\in \MR^s_{\geq 0}$ is a \textit{$Z$-siphon-point} iff $\alpha\in V_{\MR}(\CE_G)$ and $Z=\{i\mid \alpha_i = 0\}$. The $Z$-siphon point $\alpha$ is a \textit{critical $Z$-siphon-point} iff the invariant polyhedron containing $\alpha$ intersects $\MR^s_{>0}$.
\end{definition}

\begin{lemma}\label{lem:relsiph}
Let $G$ be a weakly-reversible CRN. $Z\subseteq [s]$ is a siphon of $G$ iff there exists a $Z$-siphon-point $\alpha\in \MR^s_{\geq 0}$. Moreover, $Z$ is a critical siphon iff there exists a critical $Z$-siphon-point $\alpha$.
\end{lemma}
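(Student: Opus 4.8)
The plan is to produce, for every siphon $Z$, a single canonical witness: the indicator vector $\alpha\in\{0,1\}^s$ with $\alpha_i=1$ iff $i\notin Z$. I will show this $\alpha$ is always a $Z$-siphon-point, that it is even a \emph{critical} $Z$-siphon-point whenever $Z$ is a critical siphon, and conversely that the zero set of any $Z$-siphon-point is a siphon (a critical one in the critical case). A preliminary move, used throughout, is to invoke Lemma~\ref{lem:samesiphons} to replace $G$ by $U_G$; this is harmless since the two have the same siphons and critical siphons, and it buys two conveniences: the siphon condition for $U_G$ is symmetric in the endpoints of an edge, and each edge of $U_G$ contributes a binomial $\psi_i-\psi_j$ (up to sign) to $\CE_G$.

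For the first biconditional, fix an edge $(i,j)$ of $U_G$ and observe that, since the coordinates of $\alpha$ are $0$ or $1$, the number $\psi_i(\alpha)$ equals $0$ if some variable in $Z$ divides $\psi_i$ and equals $1$ otherwise, and similarly for $\psi_j(\alpha)$. If $Z$ is a siphon then the symmetric siphon condition for $U_G$ says a variable of $Z$ divides $\psi_i$ exactly when one divides $\psi_j$, so $\psi_i(\alpha)=\psi_j(\alpha)$ and the binomial vanishes at $\alpha$; as the zero set of $\alpha$ is $Z$ by construction, $\alpha$ is a $Z$-siphon-point. Conversely, if $\alpha$ is a $Z$-siphon-point then every such binomial vanishes at $\alpha$, so $\psi_i(\alpha)=\psi_j(\alpha)$; if a variable of $Z$ divides $\psi_j$ then $\psi_j(\alpha)=0$, hence $\psi_i(\alpha)=0$, which forces some variable of $Z$ to divide $\psi_i$. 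That is exactly the siphon condition for $U_G$, hence for $G$.

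For the criticality half, suppose $Z$ is a critical siphon: there is $z\in\MR^s_{\geq0}$ with zero set $Z$ whose invariant polyhedron contains a positive point $b$, so $z\in b+S$. Apply Lemma~\ref{lem:linal} with this $S$, $b$ and $z$: its conclusion is precisely that the invariant polyhedron of the indicator vector $\alpha$ (whose $i$-th coordinate is $1$ iff $z_i>0$, i.e.\ iff $i\notin Z$) meets $\MR^s_{>0}$. Combined with the previous paragraph, $\alpha$ is a critical $Z$-siphon-point. The reverse implication is immediate: a critical $Z$-siphon-point is a point of $\MR^s_{\geq0}$ whose zero set is $Z$ and whose invariant polyhedron meets $\MR^s_{>0}$, which is the definition of $Z$ being a critical siphon.

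I do not expect a serious obstacle here; the content is essentially the dictionary between $0/1$ points and edge-vanishing on one side, and Lemma~\ref{lem:linal} on the other. The one thing to be careful about is bookkeeping: routing the argument through $U_G$ (so that the conclusion really is about siphons of $G$, via Lemma~\ref{lem:samesiphons}), and using the indicator vector $\alpha$ --- rather than the original witness $z$ --- as the critical siphon-point, since it is $\alpha$, not $z$, that is forced to lie in $V_\MR(\CE_G)$.
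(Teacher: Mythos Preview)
Your proposal is correct and follows essentially the same route as the paper: construct the $\{0,1\}$ indicator vector $\alpha$, pass to $U_G$ via Lemma~\ref{lem:samesiphons} so that the siphon condition becomes symmetric and matches the binomials of $\CE_G$, and invoke Lemma~\ref{lem:linal} for the critical case. The paper merely says ``it is easily verified'' and ``it is immediate'' where you spell out the $0/1$ evaluation of $\psi_i(\alpha)$; the only cosmetic point is that you reuse the symbol $\alpha$ for both the indicator vector (forward direction) and an arbitrary $Z$-siphon-point (converse), which is harmless but worth distinguishing in a final write-up.
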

\begin{proof}
($\Rightarrow$) Suppose $Z$ is a siphon of $G$. Consider the point $\alpha\in \{0,1\}^s$ with coordinates $\alpha_i = 0$ iff $i\in Z$, else $\alpha_i = 1$. Since $Z$ is a $U_G$-siphon by Lemma~\ref{lem:samesiphons}, it is easily verified that $\alpha\in V_{\MR}(\CE_G)$.

Now suppose $Z$ is critical. From Definition~\ref{def:siphon}, there exists a point $z\in\MR^s_{\geq 0}$ such that $Z=\{i\mid z_i=0\}$ and the invariant polyhedron containing $z$ intersects $\MR^s_{>0}$. From Lemma~\ref{lem:linal} with $S=S_G$, the invariant polyhedron containing $\alpha$ intersects $\MR^s_{>0}$.

($\Leftarrow$) Suppose $\alpha\in V_{\MR}(\CE_G)\cap \MR^s_{\geq 0}$ is such that $Z=\{i\mid \alpha_i = 0\}$. Then it is immediate that $Z$ is a $U_G$-siphon. From Lemma~\ref{lem:samesiphons}, $Z$ is a $G$-siphon.

Further, if the invariant polyhedron containing $\alpha$ intersects $\MR^s_{>0}$ then it is immediate that $Z$ is critical.
\end{proof}

Our task for proving Theorem~\ref{thm:norelsiph} is now reduced to showing that when a weakly-reversible CRN $G$ is not catalytic, $V_{\MR}(\CE_G)$ contains no critical siphon-points. By ``strongly dense'' we will mean dense in the analytic topology.

\begin{lemma}\label{lem:dense}
Let $I\subseteq\MC[x_1,x_2,\cdots,x_s]$ be an ideal such that $I = I : (x_1x_2\cdots x_s)^{\infty}$. Then $V_{\MC} (I)\setminus V_{\MC}(x_1x_2\cdots x_s)$ is strongly dense in $V_{\MC}(I)$.
\end{lemma}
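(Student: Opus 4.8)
The plan is to decompose $V_{\MC}(I)$ into irreducible components and reduce to a standard density statement on each. Write $f=x_1x_2\cdots x_s$. The first step is to extract from the hypothesis $I=I:f^\infty$ the fact that no minimal prime of $I$ contains $f$. This is the algebraic heart and it is routine: fixing an irredundant primary decomposition $I=\bigcap_k Q_k$ with $\mathfrak{p}_k=\sqrt{Q_k}$, one has $I:f^\infty=\bigcap_{f\notin\mathfrak{p}_k}Q_k$ (since $Q_k:f^\infty=Q_k$ when $f\notin\mathfrak{p}_k$, while $Q_k:f^\infty=(1)$ when $f\in\mathfrak{p}_k$), so $I:f^\infty=I$ forces $f\notin\mathfrak{p}_k$ for every $k$, in particular for every minimal prime. (Equivalently one can localize at a putative minimal prime $\mathfrak{p}\ni f$: then $\mathfrak{p}R_\mathfrak{p}$ is the unique minimal prime over $IR_\mathfrak{p}$, so $f^N\in IR_\mathfrak{p}$ for some $N$, i.e. $sf^N\in I$ for some $s\notin\mathfrak{p}$, whence $s\in I:f^\infty=I\subseteq\mathfrak{p}$, a contradiction.)

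Next, write $V_{\MC}(I)=Z_1\cup\cdots\cup Z_r$ where $Z_i=V_{\MC}(\mathfrak{p}_i)$ runs over the irreducible components. By the previous step $f\notin\mathfrak{p}_i$, so $f$ does not vanish identically on $Z_i$; hence $Z_i\cap V_{\MC}(f)$ is a \emph{proper} Zariski-closed subset of the irreducible variety $Z_i$ and therefore has strictly smaller dimension. Now I would invoke the basic fact of complex analytic geometry that a proper closed subvariety $W\subsetneq Z$ of an irreducible affine $\MC$-variety $Z$ is nowhere dense in $Z$ for the analytic topology, so that $Z\setminus W$ is analytically dense in $Z$. (Sketch: the singular locus $Z_{\mathrm{sing}}$ is a proper subvariety, hence nowhere dense by induction on dimension, so $Z_{\mathrm{reg}}$ is analytically dense in $Z$; on $Z_{\mathrm{reg}}$, a complex manifold of dimension $\dim Z$, the analytic set $W\cap Z_{\mathrm{reg}}$ has dimension $<\dim Z$ and is therefore nowhere dense by the identity theorem; combining, $Z_{\mathrm{reg}}\setminus W$ is dense in $Z$.) Applying this with $Z=Z_i$ and $W=Z_i\cap V_{\MC}(f)$ shows that $Z_i\setminus V_{\MC}(f)$ is analytically dense in $Z_i$ for each $i$.

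Finally I would assemble the pieces: taking analytic closures, $\overline{V_{\MC}(I)\setminus V_{\MC}(f)}\supseteq\overline{Z_i\setminus V_{\MC}(f)}=Z_i$ for every $i$, so the closure contains $\bigcup_i Z_i=V_{\MC}(I)$; the reverse inclusion holds because $V_{\MC}(I)$ is analytically closed. Hence $V_{\MC}(I)\setminus V_{\MC}(x_1x_2\cdots x_s)$ is strongly dense in $V_{\MC}(I)$. The only non-formal ingredient is the analytic density fact quoted above; an alternative packaging is to note that $V_{\MC}(I)\setminus V_{\MC}(f)$ is Zariski-dense in $V_{\MC}(I)$ (immediate from the minimal-prime computation) and then quote the comparison theorem that the Zariski closure and the analytic closure of a constructible set coincide — but this comparison theorem rests on the same analytic input, so there is no real shortcut, and the component-by-component argument is the more self-contained route.
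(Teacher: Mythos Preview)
Your argument is correct and follows essentially the same route as the paper: both take a primary decomposition, use the saturation hypothesis $I=I:f^\infty$ to show that $f$ lies outside every associated prime (you do this via the clean formula $I:f^\infty=\bigcap_{f\notin\mathfrak{p}_k}Q_k$, the paper by a direct contradiction on each component), reduce to the irreducible case, and then invoke the analytic density of a nonempty Zariski-open in an irreducible complex variety (the paper cites Mumford's \emph{Red Book}, you sketch it). Your passage through the minimal primes rather than all primary components is slightly more economical, but the strategy is identical.
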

\begin{proof}
From~\cite[Theorem~VIII.3.5]{Hungerford1974Algebra}, there exists $k\in\MZ_{>0}$ such that $I$ has a reduced primary decomposition $I = I_1\cap I_2\cap\cdots\cap I_k$ which is unique up to reordering. In particular, for all $j= 1$ to $k$, $I_1\cap \cdots\cap I_{j-1}\cap I_{j+1}\cap\cdots\cap I_k \nsubseteq I_j$. Also, $V_{\MC} (I) = V_{\MC} (I_1) \cup V_{\MC} (I_2)\cup\cdots\cup V_{\MC} (I_k) = V_{\MC} (\sqrt{I_1}) \cup V_{\MC} (\sqrt{I_2})\cup\cdots\cup V_{\MC} (\sqrt{I_k})$.

It is enough to show that for all $j$, $V_{\MC} (I_j)\setminus V_{\MC}(x_1x_2\cdots x_s)$ is strongly dense in $V_{\MC}(I_j)$. Let $j\in\{1,2,\cdots,k\}$.

Suppose $V_{\MC} (I_j)\setminus V_{\MC}(x_1x_2\cdots x_s)$ is empty. Then $V_{\MC} (I_j)\subseteq V_{\MC}(x_1x_2\cdots x_s)$. It follows that there exists $l\in\MZ_{>0}$ such that $(x_1x_2\cdots x_s)^l\in I_j$. From the reduced primary decomposition of $I$, there exists $f$ such that $f\notin I_j$ and for all $i\neq j$, $f\in I_i$. Then $f\notin I$ but $f\cdot (x_1x_2\cdots x_s)^l\in I$, a contradiction since $I = I : (x_1x_2\cdots x_s)^{\infty}$.

Hence $V_{\MC} (I_j)\setminus V_{\MC}(x_1x_2\cdots x_s)$ is non-empty. Since $I_j$ is primary, $\sqrt{I_j}$ is prime and $V_{\MC}(I_j)$ is irreducible. Further, $V_{\MC} (I_j)\setminus V_{\MC}(x_1x_2\cdots x_s)$ is an open subvariety of $V_{\MC}(I_j)$. The lemma follows from~\cite[Theorem~1, p.~82]{Mumford1988RedBook}.
\end{proof}

We proceed by relating $V_{\MR}(\CE_G)\cap \MR^s_{\geq 0}$ with $V_{\MC}(\CE_G)$ and $V_{\MR}(\CE_G)\cap \MR^s_{> 0}$ with $V_{\MC} (\CE_G)\setminus V_{\MC}(x_1x_2\cdots x_s)$.

\begin{lemma}\label{lem:alg}
Let $G$ be a weakly-reversible, non-catalytic CRN. Then $V_{\MR}(\CE_G)\cap\MR^s_{>0}$ is strongly dense in $V_{\MR}(\CE_G)\cap \MR^s_{\geq 0}$.
\end{lemma}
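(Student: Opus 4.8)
The plan is to deduce the statement from its complex analogue, Lemma~\ref{lem:dense}, and then push the density back down to the non-negative real orthant using the coordinatewise modulus map. First I would record that, since $G$ is non-catalytic, Lemma~\ref{lem:saturated} gives $(\CE_G) = (\CE_G):(x_1x_2\cdots x_s)^\infty$. Because $\CE_G$ consists of pure difference binomials, the binomials $\frac{N-M}{\gcd(M,N)}$ witnessing the saturation have coefficients $\pm 1$, and membership of a binomial in a binomial ideal is unaffected by extension of the (characteristic-zero) ground field; hence the ideal $I := (\CE_G) \subseteq \MC[x]$ also satisfies $I = I:(x_1x_2\cdots x_s)^\infty$. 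Lemma~\ref{lem:dense} then tells us that $V_{\MC}(\CE_G)\setminus V_{\MC}(x_1x_2\cdots x_s)$ is strongly dense in $V_{\MC}(\CE_G)$.

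Second, I would carry out the descent. Fix an arbitrary $\alpha \in V_{\MR}(\CE_G)\cap\MR^s_{\geq 0}$; then $\alpha \in V_{\MC}(\CE_G)$, so by the density just noted there is a sequence $\gamma^{(n)} \in V_{\MC}(\CE_G)\setminus V_{\MC}(x_1x_2\cdots x_s)$ with $\gamma^{(n)} \to \alpha$ in $\MC^s$. Let $|\gamma^{(n)}| := (|\gamma^{(n)}_1|, \ldots, |\gamma^{(n)}_s|)$. All coordinates of $\gamma^{(n)}$ are nonzero, so $|\gamma^{(n)}| \in \MR^s_{>0}$; and for every binomial $M - N \in \CE_G$ we have $M(\gamma^{(n)}) = N(\gamma^{(n)})$, whence, taking absolute values and using that $|M(z)| = M(|z|)$ for a monic monomial $M$, we get $M(|\gamma^{(n)}|) = N(|\gamma^{(n)}|)$. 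Thus $|\gamma^{(n)}| \in V_{\MR}(\CE_G)\cap\MR^s_{>0}$. Since the modulus is continuous and $\alpha$ is already a non-negative real point, $|\gamma^{(n)}| \to |\alpha| = \alpha$. Hence $\alpha$ lies in the closure of $V_{\MR}(\CE_G)\cap\MR^s_{>0}$; as $\alpha$ was arbitrary, this is the assertion of the lemma.

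The conceptually substantive step --- converting an ideal-theoretic saturation hypothesis into topological density --- has already been isolated in Lemma~\ref{lem:dense}, via primary decomposition and the density of a nonempty Zariski-open subset of an irreducible complex variety. Within the present argument, then, the only point requiring care is the first one: checking that Lemma~\ref{lem:saturated} may be invoked over $\MC$, i.e., that saturating with respect to $x_1x_2\cdots x_s$ commutes with the passage from the abstract base field to $\MC$ for the pure difference binomial ideal $(\CE_G)$. Once that is in hand, the modulus map transports the complex density to the real non-negative orthant essentially for free, precisely because the generators of $(\CE_G)$ are differences of monic monomials.
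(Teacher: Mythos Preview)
Your proposal is correct and follows essentially the same route as the paper: invoke Lemma~\ref{lem:saturated} to obtain $(\CE_G)=(\CE_G):(x_1\cdots x_s)^\infty$, apply Lemma~\ref{lem:dense} to get complex density, and then push down via the coordinatewise modulus map, using that $\CE_G$ consists of pure difference binomials so that $|M(z)|=M(|z|)$. Your explicit remark about the base-field extension is a point the paper simply takes for granted, but otherwise the arguments coincide.
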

\begin{proof}
 Let $\rho:V_{\MC}(\CE_G)\rightarrow \MR^s_{\geq 0}$ be given by $\langle z_1,z_2,\cdots,z_n\rangle\mapsto \langle|z_1|,|z_2|,\cdots,|z_n|\rangle$. Note that the image of $\rho$ is contained in $V_\MR(\CE_G)$ because $(\CE_G)$ is a pure difference binomial ideal (i.e., all reactions are reversible, and all specific rates are ``+1''). Suppose $\bf{z}\in V_{\MR}(\CE_G)\cap \MR^s_{\geq 0}$. We will find a sequence in $V_{\MR}(\CE_G)\cap \MR^s_{> 0}$ with limit $\bf{z}$. From Lemma~\ref{lem:saturated}, $(\CE_G)=(\CE_G):(x_1x_2\cdots x_s)^\infty$. From Lemma~\ref{lem:dense} applied to $(\CE_G)$, there exists a sequence $p_1,p_2,\cdots$ in $V_{\MC} (\CE_G)\setminus V_{\MC}(x_1x_2\cdots x_s)$ with limit $\bf{z}$. Consider the sequence $\rho(p_1), \rho(p_2), \cdots$. This sequence lies in $\MR^s_{>0}$ because each $p_i$ lies outside $V_\MC(x_1x_2\cdots x_s)$, and its limit is $\bf{z} = \rho(\bf{z})$.
\end{proof}

\begin{proof}[\textbf{Proof of Theorem~\ref{thm:norelsiph}}]
Let $G$ be a weakly-reversible, non-catalytic CRN, let $\CE=\CE_G$, and let $z\in \partial\MR^s_{\geq 0}$. We will show that $z$ is not a critical siphon-point. From Lemma~\ref{lem:relsiph}, this is enough. For all $r>0$, let $B_z(r) = \{x\in\MR^s_{\geq 0} : ||x-z|| \leq r\}$ be  the intersection with the non-negative orthant of the closed ball centered at $z$ and of radius $r$. For all $x\in\MR^s_{\geq 0}$, let $P_x$ denote the invariant polyhedron (of $U_G$) containing $x$. If $P_z$ does not intersect $\MR^s_{>0}$, we are done. Suppose $P_z$ intersects $\MR^s_{>0}$. From continuity, there exists a sufficiently small $r>0$ such that for every $x\in B_z(r)$, $P_x$ intersects $\MR^s_{>0}$. Let $r_0$ be such a sufficiently small $r$.

Note that the MAS $\langle U_G,1\rangle$ with the weight function $k=1$ is detailed balanced: the point of detailed balance corresponds to setting each species to have unit concentration. Therefore, for every $x\in B_z(r_0)$, from Birch's Theorem~\cite[Lemma~4B]{horn72general}, the set $P_x\cap V_{\MR}(\CE)\cap \MR^s_{> 0}$ of detailed balanced points in the relative interior of the invariant polyhedron contains exactly one point. For every $x\in B_z(r_0)$, let $d(x)$ be the distance of the unique point in $P_x\cap V_{\MR}(\CE)\cap \MR^s_{> 0}$ from $z$. Then the function $d:B_z(r_0)\rightarrow\MR_{\geq 0}$ which sends $x$ to $d(x)$ is a continuous function on a compact set, and hence attains its infimum. Let $x_0\in B_z(r_0)$ be a point where this infimum is attained.

If $d(x_0) = 0$ then $P_{x_0}$ is not an invariant polyhedron, a contradiction. Hence, $d(x_0) > 0$. Let $\epsilon = \min(r_0,\frac{d(x_0)}{2})$. The set $B_z(\epsilon)\cap V_{\MR}(\CE) \cap \MR^s_{>0}$ is empty by construction. From Lemma~\ref{lem:alg}, this implies that $z\notin V_{\MR}(\CE)$. Hence, by definition, $z$ is not a critical siphon-point.
\end{proof}


\section{Atoms, Primes, and Catalysis}
The idea that all chemical species are uniquely divisible into ``atoms'' --- species that are immutable and indestructible --- dates back at least to the work of John Dalton in the early
19th century, and perhaps even further back to philosophical speculations in Indian, Greek, and Islamic civilizations. Mathematical abstractions of this idea in the setting of chemical reaction networks have recently been proposed independently by three groups: ``atomic event-systems''~\cite{adleman-2008} by Adleman \textit{et al.}; ``elemented'' and ``constructive networks''~\cite{DBLP:journals/siamam/ShinarAF09} by Shinar, Alon, and Feinberg; and ``normal networks'' and ``complete networks''~\cite{Gnacadja2009394} by Gnacadja. Here we present an algebraic formulation of these ideas that generalizes all three proposals. In~\cite{adleman-2008}, we have proved global stability results for detailed balanced atomic event-systems. Gnacadja has also proved global stability for complete networks~\cite{Gnacadja2009394}. In this section, we show that our main theorem that critical siphons occur only in catalytic networks (Theorem~\ref{thm:norelsiph}) generalizes these previous results. Specifically, we obtain persistence results for atomic event-systems, and normal networks~\cite{Gnacadja2009394} as easy corollaries to our main theorem. Constructive networks, on the other hand, may admit critical siphons, and proving they are persistent is not amenable to our techniques. Prime ideals will enter into our story, and one reason is this easy and well-known algebraic result.


\begin{lemma}\label{lem:primesat}
If $I\in\mathbf{k}[x_1,x_2,\cdots,x_s]$ is a prime ideal then for all $f\notin I$, the ideals $I = I:f^\infty$.
\end{lemma}
\begin{proof}
It is clear that $I\subseteq I:f^\infty$. For the sake of contradiction, suppose $g\in I:f^\infty\setminus I$. Then there exists $k>0$ with $f^kg\in I$. Since $I$ is prime, and $g\notin I$, it follows that $f\in I$, a contradiction.
\end{proof}

\begin{theorem}\label{thm:primesat}
Let $G$ be a weakly-reversible CRN. If the ideal $(\CE_G)$ is prime then $G$ is not catalytic.
\end{theorem}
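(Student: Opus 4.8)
The plan is to combine Lemma~\ref{lem:saturated} with Lemma~\ref{lem:primesat}. Lemma~\ref{lem:saturated} tells us that a weakly-reversible CRN $G$ is catalytic if and only if the strict inclusion $(\CE_G) \subsetneq (\CE_G):(x_1x_2\cdots x_s)^\infty$ holds. So to prove $G$ is not catalytic, it suffices to show that $(\CE_G) = (\CE_G):(x_1x_2\cdots x_s)^\infty$.

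First I would observe that $x_1x_2\cdots x_s \notin (\CE_G)$: since $(\CE_G)$ is a pure difference binomial ideal (as noted after Definition~\ref{def:asses}, because $G$ is weakly-reversible), every element of $(\CE_G)$ vanishes at the all-ones point $\langle 1,1,\dots,1\rangle$, whereas $x_1x_2\cdots x_s$ evaluates to $1$ there. Hence the monomial $x_1x_2\cdots x_s$ is not in $(\CE_G)$. Now apply Lemma~\ref{lem:primesat} with $I = (\CE_G)$ and $f = x_1x_2\cdots x_s$: since $(\CE_G)$ is prime by hypothesis and $f\notin (\CE_G)$, we get $(\CE_G) = (\CE_G):f^\infty = (\CE_G):(x_1x_2\cdots x_s)^\infty$. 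By Lemma~\ref{lem:saturated}, $G$ is not catalytic.

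There is essentially no obstacle here; the theorem is a short corollary of the two preceding lemmas, and the only thing requiring a moment's thought is verifying that $x_1x_2\cdots x_s\notin(\CE_G)$, which follows immediately from the pure-difference-binomial structure by evaluating at the all-ones point. (One could alternatively invoke the fact that $(\CE_G)$ is contained in the kernel of the evaluation homomorphism $\mathbf{k}[x]\to\mathbf{k}$ sending each $x_i\mapsto 1$, and that this kernel is a proper ideal not containing any monomial.) I would write the proof in just a few lines along exactly these lines.

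\begin{proof}
Since $G$ is weakly-reversible, $(\CE_G)$ is a pure difference binomial ideal, so every element of $(\CE_G)$ vanishes at the point $\langle 1,1,\cdots,1\rangle$. As $x_1x_2\cdots x_s$ evaluates to $1$ at this point, $x_1x_2\cdots x_s\notin(\CE_G)$. By hypothesis $(\CE_G)$ is prime, so Lemma~\ref{lem:primesat} applied with $f = x_1x_2\cdots x_s$ gives $(\CE_G) = (\CE_G):(x_1x_2\cdots x_s)^\infty$. By Lemma~\ref{lem:saturated}, $G$ is not catalytic.
\end{proof}
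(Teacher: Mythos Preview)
Your proof is correct and follows essentially the same approach as the paper: show $x_1x_2\cdots x_s\notin(\CE_G)$ by evaluating at the all-ones point, then apply Lemma~\ref{lem:primesat} and Lemma~\ref{lem:saturated}. The argument and its structure match the paper's proof almost line for line.
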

\begin{proof}
At the point $(1,1,...,1)$, every binomial in $\CE_G$ evaluates to zero, and hence every polynomial in $(\CE_G)$ evaluates to zero. But every monic monomial evaluates to $1$. In particular, the monic monomial $x_1x_2\cdots x_s$ does not belong to $(\CE_G)$. From Lemma~\ref{lem:primesat}, $(\CE) = (\CE):(x_1x_2\cdots x_s)^\infty$. From Lemma~\ref{lem:saturated}, $G$ is not catalytic.
\end{proof}

If $I$ is a binomial prime ideal, then the quotient ring $\mathbf{k}[x_1,x_2,\cdots,x_s]/I$ is isomorphic (though not canonically) to a ring of Laurent monomials. In chemical reaction networks, each such isomorphism corresponds to a decomposition map of species into ``pseudoatoms.'' We now make this precise.

The \textit{main component} of a pure difference binomial ideal $I\subseteq\MC[x_1,x_2,\cdots,x_s]$ is the minimal prime $P$ containing $I$ such that $V_{\MC}(I)\cap\MR^s_{>0}=V_{\MC}(P)\cap\MR^s_{>0}$. It is well-known that $P$ is also a pure difference binomial ideal: it is obtained as the lattice ideal of the saturation of the integer lattice corresponding to $I$~\cite{eisenbud1994binomial}. Therefore, $\mathbf{k}[x_1,x_2,\cdots,x_s]/P$ is isomorphic to a ring of Laurent monomials.

\begin{lemma}\label{lem:atomicdecomposition}
Let $m,s\in\MZ_{>0}$. For $i=1$ to $m$, let $p_i = (p_{i1},p_{i2},\cdots, p_{is})$ and $q_i = (q_{i1},q_{i2},\cdots,q_{is})$ be vectors in $\MZ^s_{\geq 0}$. Let $\CE =\{x^{p_i} - x^{q_i} \mid i=1,2,\cdots,m\}\subseteq \MC[x]$. Let $v_1,v_2,\cdots,v_l\in\MZ^s$ be a basis for $\{q_1-p_1,q_2-p_2,\cdots,q_m-p_m\}^\perp$ in $\MC^s$. Let $D_v: \MC[x]\rightarrow \MC[a_1,a_1^{-1},\cdots,a_l, a_l^{-1}]$ be the ring homomorphism that sends $x_i\mapsto a_1^{v_{1i}}a_2^{v_{2i}}\cdots a_l^{v_{li}}$. Then $\ker D_v$ equals the main component of $(\CE)$.
\end{lemma}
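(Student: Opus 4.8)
The plan is to identify $\ker D_v$ directly with a lattice ideal and then appeal to the description of the main component recalled just before the lemma. Write $V$ for the $l\times s$ integer matrix whose $k$-th row is $v_k$, so that $D_v(x^u)=a^{Vu}:=a_1^{\langle v_1,u\rangle}\cdots a_l^{\langle v_l,u\rangle}$ for every $u\in\MZ^s_{\geq 0}$. Because each $v_k$ is orthogonal to $q_i-p_i$, we have $Vp_i=Vq_i$, hence $D_v(x^{p_i})=D_v(x^{q_i})$ and $(\CE)\subseteq\ker D_v$. Since $\MC[a_1^{\pm 1},\dots,a_l^{\pm 1}]$ is an integral domain, $\ker D_v$ is automatically prime.

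Next I would compute $\ker D_v$ exactly. Distinct Laurent monomials $a^c$, $c\in\MZ^l$, are $\MC$-linearly independent, so a polynomial $\sum_\alpha\lambda_\alpha x^\alpha$ lies in $\ker D_v$ if and only if, after grouping the exponents $\alpha$ by the value of $V\alpha$, the coefficients within each group sum to zero; consequently $\ker D_v$ is spanned over $\MC$ by the binomials $x^\alpha-x^\beta$ with $V\alpha=V\beta$. For such a pair, set $u=\alpha-\beta\in\ker V\cap\MZ^s=:\Lambda$ and $\gamma_i=\min(\alpha_i,\beta_i)\geq 0$; writing $u=u^+-u^-$ for the decomposition of $u$ into disjointly supported nonnegative parts, one checks $\alpha=\gamma+u^+$ and $\beta=\gamma+u^-$, so $x^\alpha-x^\beta=x^\gamma\,(x^{u^+}-x^{u^-})$. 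Hence $\ker D_v$ equals the lattice ideal $I_\Lambda=\bigl(x^{u^+}-x^{u^-}\mid u\in\Lambda\bigr)$ of $\Lambda$.

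Finally I would match $\Lambda$ with the saturation of the lattice attached to $(\CE)$. Let $\Lambda_0=\MZ\{q_1-p_1,\dots,q_m-p_m\}$ be the integer lattice corresponding to the pure difference binomial ideal $(\CE)$. Since the rows of $V$ form a basis of $\bigl(\op{span}_{\MC}\{q_i-p_i\}\bigr)^{\perp}$, taking orthogonal complements in $\MC^s$ gives $\ker V=\op{span}_{\MC}\{q_i-p_i\}$; intersecting with $\MZ^s$, and using that an integer vector lying in the $\MC$-span of rational vectors already lies in their $\MQ$-span, we get $\Lambda=\op{span}_{\MQ}\{q_i-p_i\}\cap\MZ^s=\{u\in\MZ^s\mid ku\in\Lambda_0\text{ for some }k\in\MZ_{>0}\}$, the saturation of $\Lambda_0$. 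By the fact recalled before the lemma~\cite{eisenbud1994binomial}, the main component of $(\CE)$ is precisely the lattice ideal of this saturation, namely $I_\Lambda$. Therefore $\ker D_v=I_\Lambda$ is the main component of $(\CE)$, as claimed.

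The only genuinely nontrivial step is the toric-kernel computation in the second paragraph --- establishing the equality, not merely the inclusion, $\ker D_v=I_\Lambda$ --- which rests on the linear independence of distinct Laurent monomials; everything else is lattice bookkeeping together with the Eisenbud--Sturmfels description of the main component.
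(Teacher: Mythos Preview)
Your argument is correct and self-contained: you identify $\ker D_v$ with the lattice ideal $I_\Lambda$ for $\Lambda=\ker V\cap\MZ^s$, show $\Lambda$ is the saturation of $\Lambda_0=\MZ\{q_i-p_i\}$, and then invoke the Eisenbud--Sturmfels description of the main component that the paper recalls just before the lemma. The paper itself gives no proof at all --- it simply states that the result is well-known and refers the reader to \cite{eisenbud1994binomial} --- so your write-up is precisely the standard argument one would extract from that reference, and there is nothing to compare.
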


The proof is well-known in the literature, and details can be found, for example, in \cite{eisenbud1994binomial}. If $\CE = \CE_G$ and the decomposition does not use negative numbers ($v_1,v_2,\cdots,v_l\in\MZ^s_{\geq 0}$) then $G$ together with the choice of isomorphism corresponds to an elemented network. Further, if every ``pseudoatom'' $a_i$ is itself a species --- i.e., for each $a_i$, there exists $x_j$ such that $x_j\mapsto a_i$ --- then $G$ is constructive. For elemented and constructive networks, the pseudoatoms are referred to as ``elements.''

Finite, atomic event-systems were defined in~\cite[Definition~2.11]{adleman-2008}. Here we give a chemical reaction network version of the definition for convenience, which also generalizes the notion to weakly-reversible networks.

\begin{definition}
Let $G$ be a weakly-reversible CRN. The \textit{atoms} of $G$ are $A_G = \{x_i\mid x_i$ is an isolated node in $\overline {G}\}$. An \textit{atomic monomial} is a monic monomial all of whose prime factors (variables that appear in the monomial) come from $A_G$. The CRN $G$ is \textit{atomic} iff every connected component of $\overline{G}$ contains precisely one atomic monomial.
\end{definition}

It is immediate from definitions that pre-normal networks of Gnacadja~\cite[Definition~6.1]{Gnacadja2009394} are a special case of atomic networks. The definition of pre-complete networks~\cite[Definition~7.1]{Gnacadja2009394} intuitively says that all species can be decomposed into ``elements'' (our atoms) in a unique way that is preserved by reactions. Further, pre-complete networks are restricted to ``reversible binding reactions'' that are of the form $\sum a_i x_i\autorightleftharpoons{}{} x_j$, so that every species that is not an element can be decomposed further. This gives, for every non-element, a path in the event-graph to a monomial of elements. By uniqueness of the decomposition, it follows that pre-complete networks are atomic.

\begin{lemma}\label{atomicisprime}
Let $G$ be a CRN. If $G$ is atomic then $(\CE_G)$ is prime.
\end{lemma}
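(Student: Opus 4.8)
The plan is to realise $(\CE_G)$ as the kernel of an explicit monomial homomorphism into a polynomial ring, which is automatically a prime ideal.

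First, for each monic monomial $M$ let $\alpha(M)$ denote the atomic monomial lying in the connected component of $M$ in $\overline{G}$; this is a well-defined function precisely because $G$ is atomic (existence and uniqueness of the atomic monomial in each component). Write $M_j = \alpha(x_j)$, and let $D\colon \mathbf{k}[x]\to\mathbf{k}[x_i : i\in A_G]$ be the $\mathbf{k}$-algebra homomorphism determined by $x_j\mapsto M_j$. The key observation is that $D(M)=\alpha(M)$ for every monic monomial $M$. Indeed, the edge set of $\overline{G}$ is translation-invariant, so path-connectedness in $\overline{G}$ is a congruence on the monoid of monic monomials; from the fact that $x_j$ is path-connected to $M_j$ we get that $M=\prod_j x_j^{a_j}$ is path-connected to $\prod_j M_j^{a_j}=D(M)$, and since $D(M)$ is atomic, uniqueness forces $D(M)=\alpha(M)$.

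Next I would show $\ker D=(\CE_G)$. For $(\CE_G)\subseteq\ker D$: a generator $\psi_i-\psi_j$ of $\CE_G$ comes from an edge of $G$ (in one direction or the other), so $\psi_i$ and $\psi_j$ lie in the same component of $\overline{G}$, whence $D(\psi_i)=\alpha(\psi_i)=\alpha(\psi_j)=D(\psi_j)$. For the reverse inclusion, take $f\in\ker D$ and group its monomials according to their image under $D$, equivalently according to their $\overline{G}$-component. Within one group any two monomials are path-connected in $\overline{G}$, so their difference lies in $(\CE_G)$ by Lemma~\ref{lem:evtgraph}; thus modulo $(\CE_G)$ we may replace $f$ by $\sum_\beta c_\beta N_\beta$, where $N_\beta$ ranges over one chosen monomial per group and $c_\beta$ is the sum of the coefficients of $f$ on that group. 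Applying $D$ sends this to $\sum_\beta c_\beta\,\beta$, where the atomic monomials $\beta$ are pairwise distinct, so linear independence of monomials in $\mathbf{k}[x_i:i\in A_G]$ forces every $c_\beta=0$, i.e. $f\in(\CE_G)$.

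Finally, $\mathbf{k}[x]/\ker D$ is isomorphic to the image of $D$, which is a subring of the polynomial ring $\mathbf{k}[x_i:i\in A_G]$ and hence an integral domain; therefore $\ker D=(\CE_G)$ is prime. (Alternatively, one can observe that $D$ is, up to relabelling of variables, a decomposition map $D_v$ as in Lemma~\ref{lem:atomicdecomposition}, and that atomicity is exactly the hypothesis making its kernel coincide with $(\CE_G)$ rather than merely contain it.) The only delicate point is the first step — checking that $\alpha$ is well-defined and that $D(M)=\alpha(M)$ — which is where atomicity ("exactly one atomic monomial per component") and the translation-invariance of the event-graph are both essential; once that identity is in hand, the rest is bookkeeping together with Lemma~\ref{lem:evtgraph}.
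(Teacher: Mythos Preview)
Your proof is correct and follows the same overall strategy as the paper: both construct the identical decomposition map $D$ sending each $x_j$ to the unique atomic monomial in its $\overline{G}$-component, both verify $(\CE_G)\subseteq\ker D$ via connectedness, and both conclude primeness from the quotient embedding into a polynomial ring. The one substantive difference is in the reverse inclusion $\ker D\subseteq(\CE_G)$: the paper invokes an external result (that the kernel of a monomial map is generated by pure difference binomials) and then checks each such binomial $M-N$ via Lemma~\ref{lem:evtgraph}, whereas you give a direct coefficient-grouping argument that avoids this citation entirely. Your route is thus slightly more self-contained; the paper's is shorter once the cited fact is granted. Both arguments hinge on the same key observation that $M$ and $D(M)$ lie in the same component of $\overline{G}$, which you state explicitly as $D(M)=\alpha(M)$ and the paper establishes inductively from the case $M=x_i$.
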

\begin{proof}
Let $\CE = \CE_G$. Let $A_G=\{a_1,a_2,\cdots,a_l\}\subseteq\{x_1,x_2,\cdots,x_s\}$ be the atoms. We claim that $\mathbf{k}[x]/(\CE) \cong \mathbf{k}[a_1,a_2,\cdots,a_l]$. Since the right-hand side is an integral domain, $(\CE)$ must be prime.

To prove the isomorphism, we give an explicit ``decomposition map'' $D$ from $\mathbf{k}[x]$ to $\mathbf{k}[a_1,a_2,\cdots,a_l]$: $x_i$ gets sent to the unique atomic monomial from $A_G$ that belongs to the connected component of $x_i$ in $\overline{G}$. By definition of $A_\CE$, the atoms get sent to themselves. The map is defined to be the identity on $\mathbf{k}$, and extended to polynomials to make it a ring homomorphism.

If $M-N\in\CE$ then $D(M-N) = 0$. This is because $M$ and $N$ are in the same connected component of $\overline{G}$, and hence $D(M)=D(N)$. It follows that $(\CE)\subseteq \ker{D}$.

To show $\ker{D}\subseteq(\CE)$, first note that $\ker{D}$ is generated by pure difference binomials~\cite[Theorem~7.3]{miller2005comb}. It is sufficient to show that for all pure differences monomial $M - N = \prod x_i^{c_i} - \prod x_i^{d_i}$, if $D(M-N) = 0$ then $M - N\in (\CE)$. We claim that $M$ and $D(M)$ are in the same connected component of $\overline{G}$. This is certainly true if $M = x_i$, by the definition of $D$. It now follows in general, by the definition of $\overline{G}$, and since $D$ is a ring homomorphism. Since $D(M) = D(N)$, it follows that the monomials $M$ and $N$ are in the same connected component of $\overline{G}$. Therefore, by Lemma~\ref{lem:evtgraph}, $M - N \in (\CE)$.
\end{proof}

From Theorem~\ref{thm:primesat}, atomic and pre-complete networks are non-catalytic, and hence persistent due to Corollary~\ref{cor:persistent}. In this sense, non-catalytic networks generalize the notion of atomicity.

Recall that every complete network is pre-complete and admits a Lyapunov function. Similarly, every finite, natural, atomic event-system is atomic, and admits a Lyapunov function. Therefore, we obtain the global attractor conjecture for finite, natural, atomic event-systems~\cite[Theorem~6.1]{adleman-2008}, and for complete networks~\cite[Theorem~8.3]{Gnacadja2009394}, as a corollary to the persistence of non-catalytic networks (Corollary~\ref{cor:persistent}).

However, our proofs do not extend to constructive networks. Consider the following example.

\begin{example}\label{ex:constructive}
Consider the reaction $x \autorightleftharpoons{}{} 2x$. By choosing the set of pseudoatoms (elements) to be the empty set, one gets a map $D:\MC[x]\rightarrow\MC$. The map $D$ sends $x$ to $1$, and the ring $\MC[x]$ to the ring $\MC$. The kernel is $(x-1)$, which is the main component of the ideal $(x-x^2)$. From the remark after Lemma~\ref{lem:atomicdecomposition}, $\langle G,D\rangle$ is elemented, with no elements. It trivially becomes a constructive network, since there are no elements (pseudoatoms). Correspondingly, $\{x\}$ is a critical siphon.
\end{example}

The proof of Lemma~\ref{atomicisprime} does not hold for constructive networks because $M$ and $D(M)$ may belong to different connected components. In the example above, $D(x)$ is $1$, but $1$ and $x$ are in different connected components of the event-graph. To insist that $M$ and $D(M)$ be in the same connected component is to insist that the CRN provide a ``decomposition path'' from compounds to atoms. Constructive networks admit critical siphons because their definition does not insist on the existence of such decomposition paths.

\section{Autocatalysis}
The converse of Theorem~\ref{thm:norelsiph} is false. There exist examples of catalytic weakly-reversible CRNs with no critical siphons, as we now show.

\begin{example}\label{example4}
Consider the reversible reactions
\begin{align*}
2x &\autorightleftharpoons{}{} 0\
\\x + y &\autorightleftharpoons{}{} y
\end{align*}
There is an edge from the monomial $xy$ to the monomial $y$. However, there is no path in the event-graph from $x$ to $1$. Hence, the network is catalytic. However, there are no critical siphons. The only siphon is $\{y\}$ and this is not critical because $y$ is dynamically invariant, \textit{i.e.}, along solution trajectories, its value remains unchanged.
\end{example}

It is possible to make the class of catalytic networks smaller to exclude networks like the one in Example~\ref{example4}.

\begin{definition}\label{def:almsat}
A weakly-reversible CRN $G$ is \textit{strictly catalytic} iff there exist path-connected $M,N\in V(\overline{G})$ such that for all integers $k>0$, $(M/\gcd(M,N))^k$ and $(N/\gcd(M,N))^k$ are not path-connected in $V(\overline{G})$.
\end{definition}

Note that every catalytic network is strictly catalytic, with $k=1$. Weakly-reversible networks with critical siphons must be strictly catalytic: Theorem~\ref{thm:norelsiph}, Corollary~\ref{cor:persistent}, and Corollary~\ref{cor:gas} hold with ``catalytic'' replaced by ``strictly catalytic.'' This is because binomials of the form $M^k - N^k$ can be factorized into the binomial $M-N$ and another factor which is non-vanishing on the positive orthant. The non-negative variety $V_{\MR}(\CE_G)\cap\MR^s_{\geq 0}$ depends only on the binomial $M-N$. Hence, the proofs of Lemma~\ref{lem:alg} and Theorem~\ref{thm:norelsiph} go through even when ``catalytic'' is replaced by ``strictly catalytic.''

The next example shows a weakly-reversible CRN that is strictly catalytic but has no critical siphons.

\begin{example}\label{example5}
Consider the reaction $x+a \autorightleftharpoons{}{} x+b$. This reaction is strictly catalytic since for all $k>0$, there is no path from $a^k$ to $b^k$. However, there are no critical siphons. This is because $x$ belongs to every siphon, and no siphon-point with value $0$ for $x$ is critical, because $x$ is dynamically invariant. Adding the reaction $x\autorightleftharpoons{}{} 2x$ introduces a stoichiometrically compatible direction transverse to the siphon, so that $x$ is no longer dynamically invariant, and the set $\{x\}$ becomes a critical siphon. This new network is also strictly catalytic.
\end{example}

The modification required of the network in Example~\ref{example5} was of a very special kind. It involved the introduction of a reaction where one molecule of $x$ can produce two molecules of $x$. In other words, the species $x$ was acting as an autocatalyst. We now make this precise.

\begin{definition}\label{def:autocatalytic}
A weakly-reversible CRN $G$ is \textit{autocatalytic} iff there exist $i\in [s]$, $a\in\MZ_{>0}$, $M,N\in \mathcal{M}_s$ such that
\begin{enumerate}
\item $x_i^a(M-x_iN)\in (\CE_G)$ and $x_i\nmid M$,
\item For all $k\in\MZ_{>0}$, $M^k -(x_iN)^k \notin (\CE_G)$.
\end{enumerate}
\end{definition}

Every autocatalytic network is catalytic, and in fact strictly catalytic. Note that the last CRN of Example~\ref{Seesaw1}, the CRN of Example~\ref{ex:constructive}, and the second CRN of Example~\ref{example5} are autocatalytic. Here is one more example of an autocatalytic network, where the autocatalysis involves two species.

\begin{example}
Consider the reactions
\begin{align*}
y &\autorightleftharpoons{}{} 2x\
\\2y &\autorightleftharpoons{}{} x
\end{align*}
The point $(0,0)$ is a critical siphon-point. The associated event-system is $\CE=\{y-x^2,y^2-x\}$. The ideal $(\CE)$ contains the binomials $y-y^4$ as well as $x-x^4$, however for all $k\in\MZ_{>0}$, the binomials $1 - x^{3k}$ and $1 - y^{3k}$ do not belong to $(\CE)$. Hence, the CRN is autocatalytic.
\end{example}

These examples prompt the following conjecture.

\begin{conjecture}\label{conj:auto}
Let $G$ be a weakly-reversible CRN. Then $G$ is autocatalytic iff $G$ admits a critical siphon.
\end{conjecture}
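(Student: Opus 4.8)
The forward direction ($G$ autocatalytic $\Rightarrow$ $G$ admits a critical siphon) looks the more tractable of the two, and I would attack it first. Suppose $x_i^a(M - x_iN) \in (\CE_G)$ with $x_i \nmid M$, and $M^k - (x_iN)^k \notin (\CE_G)$ for all $k > 0$. I want to produce a point $z \in V_{\MR}(\CE_G) \cap \MR^s_{\geq 0}$ with $z_i = 0$ whose invariant polyhedron meets $\MR^s_{>0}$; by Lemma~\ref{lem:relsiph} this gives a critical siphon. The natural candidate is the point $\alpha$ that is $0$ on the coordinate of $x_i$ (and on whatever other coordinates are forced to vanish) and $1$ elsewhere --- one must first check this is a siphon-point, i.e.\ that $\{i : \alpha_i = 0\}$ is a $U_G$-siphon; the condition $x_i^a(M-x_iN) \in (\CE_G)$ together with $x_i \nmid M$ should drive this, perhaps after enlarging the zero set to the smallest siphon containing $i$. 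For criticality I would use Lemma~\ref{lem:linal}: it suffices to exhibit a single positive point $b$ in the same affine translate of $S_G$, equivalently to show that the stoichiometric subspace $S_G$ contains a vector that is strictly positive in the $x_i$-coordinate (and appropriately signed elsewhere). The relation $x_i^a(M - x_iN) \in (\CE_G)$ says the lattice vector $(\deg M) - (e_i + \deg N)$ lies in the lattice spanned by the reaction vectors, and this vector has $x_i$-coordinate $-1$ (strictly nonzero); the failure of $M^k = (x_iN)^k$ in the event-graph for all $k$ is exactly what rules out this vector being ``killed'' by a positive-orthant scaling, i.e.\ it guarantees the transverse direction to the siphon is genuinely present rather than being a dynamical invariant. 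Assembling these with Lemma~\ref{lem:linal} should close this direction.

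The reverse direction ($G$ admits a critical siphon $\Rightarrow$ $G$ autocatalytic) is where I expect the real difficulty, and it is the strengthening of Theorem~\ref{thm:norelsiph} whose contrapositive must be proved with ``catalytic'' replaced by ``autocatalytic.'' The plan is to run the argument of Theorem~\ref{thm:norelsiph} in reverse: assume $G$ is \emph{not} autocatalytic and show $V_{\MR}(\CE_G) \cap \MR^s_{\geq 0}$ has no critical siphon-points, i.e.\ $V_{\MR}(\CE_G) \cap \MR^s_{>0}$ is dense in $V_{\MR}(\CE_G) \cap \MR^s_{\geq 0}$ (which by the density/continuity/Birch machinery of Lemma~\ref{lem:alg} and the proof of Theorem~\ref{thm:norelsiph} already suffices). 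So the crux is: \emph{not autocatalytic} $\Rightarrow$ $(\CE_G) : (x_1\cdots x_s)^\infty$ and $(\CE_G)$ have the same real non-negative variety. Negating Definition~\ref{def:autocatalytic}: for every $i$, every $a > 0$, and every $M, N$ with $x_i \nmid M$, either $x_i^a(M - x_iN) \notin (\CE_G)$, or there is some $k$ with $M^k - (x_iN)^k \in (\CE_G)$. I would combine this with the structure theory of pure-difference binomial ideals --- the primary decomposition of $(\CE_G)$ into lattice-ideal-like primary components as in Eisenbud--Sturmfels --- and argue that each embedded/non-saturated primary component must be ``detectable'' on the $x_i = 0$ locus in a way that the not-autocatalytic hypothesis forbids once we pass to $k$th powers. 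Concretely: any saturated binomial $\frac{N - M}{\gcd(M,N)}$ that is \emph{not} in $(\CE_G)$ records a catalytic pathway; dividing out the common catalyst and tracking which variable drops exponent identifies a candidate $(i, a, M, N)$, and not-autocatalytic forces $M^k \equiv (x_iN)^k$, which makes the would-be siphon direction a torsion relation in the quotient lattice and hence non-transverse --- so the corresponding siphon-point is not critical.

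The main obstacle, then, is bookkeeping the $k$th-power clause of Definition~\ref{def:autocatalytic} through the primary decomposition: one must show that the existence of \emph{some} $k$ with $M^k - (x_iN)^k \in (\CE_G)$ is exactly equivalent to the saturation not introducing a new non-negative real point along that direction. This is an ``integer scaling versus real scaling'' phenomenon --- over $\MR_{>0}$ the relation $u^k = v^k$ with $u, v > 0$ forces $u = v$, so a $k$th-power binomial cuts out the same positive variety as the degree-one binomial, which is why strictly-catalytic-but-not-autocatalytic networks (Example~\ref{example5}) fail to produce critical siphons. Making this precise uniformly across all primary components, and handling the multi-species autocatalysis case (where the ``autocatalyst'' may be a monomial rather than a single $x_i$, as in the final example of the paper), is the delicate part; it may require generalizing Definition~\ref{def:autocatalytic} itself, or at least a lemma showing the single-variable formulation already captures the general lattice-theoretic obstruction. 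I would expect the positive-variety-insensitivity of $k$th powers (already invoked informally in the paper after Definition~\ref{def:almsat}) to be the technical engine, upgraded from a remark to a structural statement about the saturation of $(\CE_G)$.
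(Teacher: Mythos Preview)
The statement you are attempting to prove is a \emph{conjecture}: the paper does not supply a proof of either direction. After stating Conjecture~\ref{conj:auto}, the paper only remarks that ``if the conjecture is true, one may hope to attack Feinberg's persistence conjecture.'' There is therefore no paper proof to compare your proposal against; both implications are open.

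On the merits of your sketch: in the forward direction, the passage from ``$S_G$ contains a vector with nonzero $x_i$-coordinate'' to ``the siphon containing $i$ is critical'' is a genuine gap. Once you enlarge $\{i\}$ to the smallest siphon $Z$ containing $i$, you need the face $\{x_j=0:j\in Z\}$ to be stoichiometrically reachable from a positive point, which requires that $S_G$ contain directions transverse to \emph{every} coordinate in $Z$ simultaneously (in the sense needed for Lemma~\ref{lem:linal}); a single vector with $x_i$-coordinate $1+n_i$ says nothing about the other coordinates of $Z$, and the $k$th-power clause in Definition~\ref{def:autocatalytic} concerns only the single variable $x_i$. Condition~(2) rules out a torsion relation along the $x_i$-direction, but it does not by itself prevent the other species in $Z$ from being conserved (dynamically invariant), which would kill criticality.

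In the reverse direction you have correctly located the crux --- upgrading the remark after Definition~\ref{def:almsat} about $k$th powers to a structural statement controlling every primary component of $(\CE_G)$ --- but your outline does not yet explain why ``not autocatalytic'' (a condition quantified one variable $x_i$ at a time) forces the saturation $(\CE_G):(x_1\cdots x_s)^\infty$ to agree with $(\CE_G)$ on the entire non-negative variety. The multi-species case you flag is exactly the difficulty: the paper's final example already exhibits autocatalysis that is only visible through the interaction of two species, and it is unclear that Definition~\ref{def:autocatalytic} as written captures every lattice-saturation obstruction. Your instinct that the definition itself may need to be generalized is well placed; as matters stand, neither direction follows from your plan without substantial new ideas.
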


If the conjecture is true, one may hope to attack Feinberg's persistence conjecture (Open~\ref{open:feinberg}) by exploiting the special structure of autocatalytic networks.

\subsubsection*{Acknowledgments}
This work was prompted by discussions with Len Adleman. The ideas in the proofs of Lemma~\ref{lem:dense} and Lemma~\ref{lem:alg} are due to Najimuddin Fakhruddin. I thank Bernd Sturmfels, Dustin Reishus, Anne Shiu, and Ezra Miller for helpful discussions. Any mistakes that remain are mine alone.

\bibliographystyle{amsplain}
\bibliography{../eventsystems}
\end{document}